\providecommand{\keywords}[1]
{
  \small	
  \textbf{\textit{Keywords---}} #1
}
\newcommand{\reals}{\mathbb{R}}
\author[1]{Alex Ziyu Jiang} 
\author[1]{Abel Rodriguez}
\affil[1]{Department of Statistics, University of Washington}
\title{Improvements on Scalable Stochastic Bayesian Inference Methods for Multivariate Hawkes Process}
\date{}
\begin{document}

\maketitle

\begin{abstract}
Multivariate Hawkes Processes (MHPs) are a class of point processes that can account for complex temporal dynamics among event sequences. In this work, we study the accuracy and computational efficiency of three classes of algorithms which, while widely used in the context of Bayesian inference, have rarely been applied in the context of MHPs: stochastic gradient expectation-maximization, stochastic gradient variational inference and stochastic gradient Langevin Monte Carlo. An important contribution of this paper is a novel approximation to the likelihood function that allows us to retain the computational advantages associated with conjugate settings while reducing approximation errors associated with the boundary effects. The comparisons are based on various simulated scenarios as well as an application to the study of risk dynamics in the Standard \& Poor's 500 intraday index prices among its 11 sectors.
\end{abstract}

\keywords{Hawkes Processes; Stochastic Optimization; Variational inference; EM Algorithm; Langevin Monte Carlo; Bayesian Inference}

\section{Introduction}

The multivariate Hawkes process (MHP) model \citep{hawkes1971,liniger2009multivariate} is a class of temporal point process models that can capture complex time-event dynamics among multiple objects. Specifically, MHPs demonstrate the \textit{self-} and \textit{mutually-exciting} properties in multidimensional event sequences, where an event occurrence in a certain dimension leads to a higher likelihood of future events appearing in the same or other dimensions. This feature of the models makes MHPs attractive in a wide range of applications, including earth sciences \citep{ogata1988statistical}, finance \citep{bacry2015hawkes} and social media analysis \citep{rizoiu2017hawkes}. 

Computational methods for maximum likelihood inference in Hawkes process models include direct maximization of the likelihood function (e.g., see \citealp{ozaki1979maximum}) and the expectation-maximization (EM) algorithm (e.g., see \citealp{veen2008estimation} and \citealp{Lewis11}), {\color{black} 
as well as penalized least-squares estimation method  of \cite{bacry2020sparse} and the variable selection-integrated maximum likelihood method of \cite{bonnet2023inference} for sparse interaction settings. 
In the context of Bayesian inference, some of the algorithms that have been proposed include Markov Chain Monte Carlo algorithms (MCMC) \citep{rasmussen2013bayesian,mohler2013modeling,holbrook2021scalable, holbrook2022viral, deutsch2022bayesian}, variational approximations \citep{xu2017dirichlet,malem2022variational, sulem2022scalable}, sequential Monte Carlo \citep{linderman2017bayesian}, and the maximum \textit{a posteriori} probability estimation using the Expectation-Maximization algorithm (EM) \citep{zhang2018efficient}. In addition, theoretical guarantees for nonparametric Bayesian estimation methods of MHPs have been studied in \cite{donnet2020nonparametric} and \cite{sulem2021bayesian}.} One key challenge associated with all these computational approaches is that they do not scale well to large datasets. Specifically, the double summation operation needed to carry out a single likelihood evaluation is typically of time complexity $\mathcal{O}(KN^2)$, where $K$ is the number of dimensions and $N$ is the number of total events. Even in situations where careful implementation can reduce the time complexity to $\mathcal{O}(KN)$ (e.g., for exponential excitation functions), the cost of this operation can be prohibitive for moderately large datasets.  Furthermore, for methods that utilize the branching structure of MHPs, the space complexity is $\mathcal{O}(N^2)$ in all cases. An additional complication is that the calculation of the so-called ``compensator'' term in the likelihood function might limit our ability to exploit potential conjugacy in the model structure.  Standard approximations to the compensator, which are well-justified when maximizing the full likelihood, can have a more serious impact when applied to small datasets, e.g., those arising in the context of algorithms that use subsets of the original data.    

Algorithms inspired by stochastic optimization \citep{Robbins51} ideas, which approximate the gradient of the objective function through noisy versions evaluated on subsamples, offer an alternative for Bayesian inference on large datasets. Examples of such algorithms include stochastic gradient EM algorithms for finding the posterior mode of a model (e.g., see \citealp{chen2018stochastic}), stochastic gradient variational algorithms (e.g., see \citealp{Hoffman13}) and stochastic gradient Hamiltonian Monte Carlo methods (e.g., see \citealp{nemeth2021stochastic} and references therein).  The use of stochastic gradient methods in the context of MHP models is, nonetheless, limited. Exceptions include \cite{linderman2015scalable}, who consider the use of stochastic gradient variational inference in the context of a discretized MHP, and \cite{nickel2020learning}, who discuss stochastic gradient methods to directly maximize the observed data likelihood.
%
%

In this paper, we discuss the efficient implementation of stochastic gradient EM, stochastic gradient variational approximations, and stochastic gradient Langevin diffusion methods in the context of parametric MHP models, and evaluate various aspects of their performance using both simulated and real datasets.  A key contribution is an investigation of a novel approximation technique for the likelihood of the subsamples based on first-order Taylor expansion of the compensator term of the MHP models. We show that this novel approximation can lead to improvements in both point and interval estimation accuracy. {\color{black} For illustration purposes, we focus on intensity functions with exponential excitation functions. However, the insights gained from our experiments can be useful when working with other excitation functions that are proportional to density functions for which a conjugate prior on the unknown parameters is tractable.}

Not only is the literature on stochastic gradient methods for Bayesian inference in MHP models limited, but the trade-offs between computational speed and accuracy are not well understood in this context.  For \textit{full-batch} methods (i.e., when using gradients based on the whole dataset rather than subsamples) \cite{zhou2020} compares the estimation properties for EM, variational and random-walk MCMC algorithms.  Our work extends this comparative evaluation to algorithms based on stochastic gradient methods. 
{\color{black} Additionally, we apply our methods to study the market risk dynamics in the Standard \& Poor (S\&P)'s 500 intraday index prices for its 11 sectors, using all three computation approaches. Our analysis suggests that the 11 sectors can be grouped into four categories. Price movements from sectors within the same category are more likely to be associated with each other, compared to sectors from different categories.  Our analysis also shows that the effective range of the interactions between sectors is in the order of at most a few minutes.}


\section{Multivariate Hawkes process models}\label{sec:mhp}

Let $\mathbf{X} = \{(t_i,d_i) : i = 1,\dots,n\}$ be a realization from from a marked point process where $t_i \in \reals^{+}$ represents the time at which the $i$-th event occurs and $d_i \in\{1, \ldots, K\}$ is a mark that represents the dimension in which the event occurs.  For example, $t_i$ might represent the time at which user $d_i$ makes a social media post, or the time at which the price of stock $d_i$ drops below a certain threshold.
Also, let $n$ be the total number of events in the sequence and let $n_k$ be the number of events in dimension $k$. Similarly, let $\mathcal{H}_t = \{(t_i, d_i): t_i < t, t_i \in \mathbf{X}\}$ be the set of events that happened up until time $t$, and $N^{(k)}(t)$ be the number of events in dimension $k$ that occurred on $[0,t]$.  A sequence $\mathbf{X}$ follows a multivariate Hawkes process if the conditional density function on dimension $\ell$ 
has the following form: 
\begin{align}
\label{eq:1}
\lambda_{\ell}(t) \equiv \lim_{h \rightarrow 0} \frac{\mathbb{E}[N^{(\ell)}(t+h)-N^{(\ell)}(t) \mid \mathcal{H}_t]}{h} = \mu_\ell+ \sum_{k=1}^K \sum_{t_{i}<t, d_i = k} \phi_{k,\ell}\left(t-t_{i}\right),
\end{align}
%
%
%
where $\mu_\ell > 0$ is the background intensity for dimension $\ell$, and $\phi_{k,\ell}(\cdot): \mathbf{R}^+ \rightarrow \mathbf{R}^+$ is the {excitation function} that controls how previous events in dimension $\ell$ affect the occurrence of new events in dimension $k$. 

{\color{black} Common modeling choices for the excitation function include the exponential decay function where $\phi_{k,\ell}(\Delta) = \alpha_{k,\ell} \beta_{k,\ell} e^{-\beta_{k,\ell} \Delta}$ for $\Delta \ge 0$, and the power law decay function where where $\phi_{k,\ell}(\Delta) = \frac{\alpha_{k,\ell} \beta_{k,\ell}}{(1+\beta_{k,\ell} \Delta)^{1+\gamma_{k,\ell}}}$ for $\Delta \ge 0$.} For illustration purposes, in this paper we focus on the exponential decay function. In that case, the parameter $\alpha_{k,\ell}$ controls the importance of events from dimension $k$ on the appearance of events in dimension $\ell$, and $\beta_{k,\ell}$ controls the magnitude of exponential decay of the instant change associated with a new event.

Let $\boldsymbol{\theta} = (\boldsymbol{\alpha}, \boldsymbol{\beta}, \boldsymbol{\mu})$ denote the vector of all model parameters. Using standard theory for point processes (e.g., see \citealp{daley2008}), the observed log-likelihood associated with a Hawkes process can be written as 

\begin{equation}
\label{eq:obslik}
\begin{aligned}
\mathcal{L}({\mathbf{X} \mid \boldsymbol{\alpha}, \boldsymbol{\beta}, \boldsymbol{\mu}})&= \sum_{\ell=1}^K \sum_{d_i = \ell}  \log \lambda_{\ell}\left(t_{i}\right)- \sum_{\ell=1}^K\int_{0}^T \lambda_{\ell}(s) ds \\
&= \sum_{\ell=1}^K \sum_{d_i =\ell}  \log \left(\mu_\ell  +  \sum_{k=1}^K \sum_{\substack{j < i \\ d_{j} = k, d_{i} = \ell}}\alpha _{k,\ell}\beta _{k,\ell}e^{-\beta _{k,\ell}(t_i-t_{j})} \right)- \sum_{\ell=1}^K \mu_\ell T \\
& \;\;\;\;\;\;\;\;\;\;\;\;\;\;\;\;\;\;\;\;\;\;\;\;\;\;\;\;\;\;\;\;\;\;\;\;\;\; - \sum_{k=1}^K \sum_{\ell=1}^K \alpha _{k,\ell} \left[n_k- \sum_{d_i = k} \exp \left(-\beta _{k,\ell}\left(T-t_{i}\right)\right)\right].
\end{aligned}    
\end{equation}

The MHP can also be obtained as a multidimensional Poisson cluster process  in which each point is considered an ``inmigrant'' or an ``offspring''  \citep{hawkes1974cluster,marsan2008,zhou2013,rasmussen2013bayesian}. We use the lower-triangular $n \times n$ binary matrix $\mathbf{B}$ to represent the latent branching structure of the events, where each row contains one and only one non-zero entry. For the strictly lower-triangular entries on the matrix, $B_{ij} = 1$ indicates that the $i$-th event can be viewed as an offspring of the $j$-th event. On the other hand, for the diagonal entries of the matrix, $B_{ii} = 1$ indicates that the $i$-th event is an immigrant. Each immigrant independently generates a cluster of offsprings that can further generate newer generations of offspring.


The branching structure, which is typically latent and unobservable, allows us to decouple the complex observed likelihood into factorizable terms and design simpler computational algorithms. The complete data log-likelihood, defined as the joint log-likelihood of the observed data and the branching structure $\mathbf{B}$, has the following form :
%
%
%
%
%
\begin{equation}
\label{eq:complik}
\begin{aligned}
\mathcal{L}(\mathbf{X}, \mathbf{B} \mid \boldsymbol{\alpha}, \boldsymbol{\beta}, \boldsymbol{\mu})&=\sum_{\ell=1}^K\left|I_\ell\right| \log \mu_\ell+\sum_{k=1}^K \sum_{\ell=1}^K\left[ \left|O _{k,\ell}\right|\left(\log \alpha _{k,\ell}+\log \beta _{k,\ell}\right)-\beta _{k,\ell} \sum_{\substack{j < i \\ d_{j} = k, d_{i} = l}} B_{ij}\left(t_i-t_{j}\right)\right] \\
& \;\;\;\;\;\;\;\;\;\;\;\;\;\;\;\;\;\;\;\;\;\;\;\;\;\;\;\;\;\;\;\;\;\;\;\;\;\;\;\;-\sum_{\ell=1}^K \mu_\ell T
-\sum_{k=1}^K \sum_{\ell=1}^K \alpha _{k,\ell}\left[n_k-\sum_{d_i=k} \exp \left(-\beta _{k,\ell}\left(T-t_i\right)\right)\right]  ,
\end{aligned}
\end{equation}
%
%
%
%
where $|I_\ell| = \sum_{\substack{1 \leq i \leq n \\ d_i = \ell}} B_{ii}$ is the number of immigrants for dimension $\ell$, and $|O_{k,\ell}| = \sum_{\substack{j < i \\ d_{j} = k, d_{i} = \ell}}B_{ij}$ is the number of descendants on dimension $k$ that are an offspring of an event on dimension $j$.

\subsection{Approximation for the data likelihood}\label{sec:approx}

The expressions for the observed and complete data likelihood in \eqref{eq:obslik}  and \eqref{eq:complik} share the same term
\begin{align}
\Upsilon =	\sum_{\ell=1}^{K} \int_{0}^T \lambda_{\ell}(s) d s =\sum_{\ell=1}^K\mu_\ell T +  \sum_{k=1}^K\sum_{\ell=1}^K  \alpha_{k,\ell}\left[\sum_{d_i=k} \left(1 - \exp \left\{-\beta_{k,\ell}\left(T-t_i\right)\right\}\right)\right] .
\end{align}
\begin{figure*}[t]
	\centering
	\includegraphics[width = 0.7\linewidth]{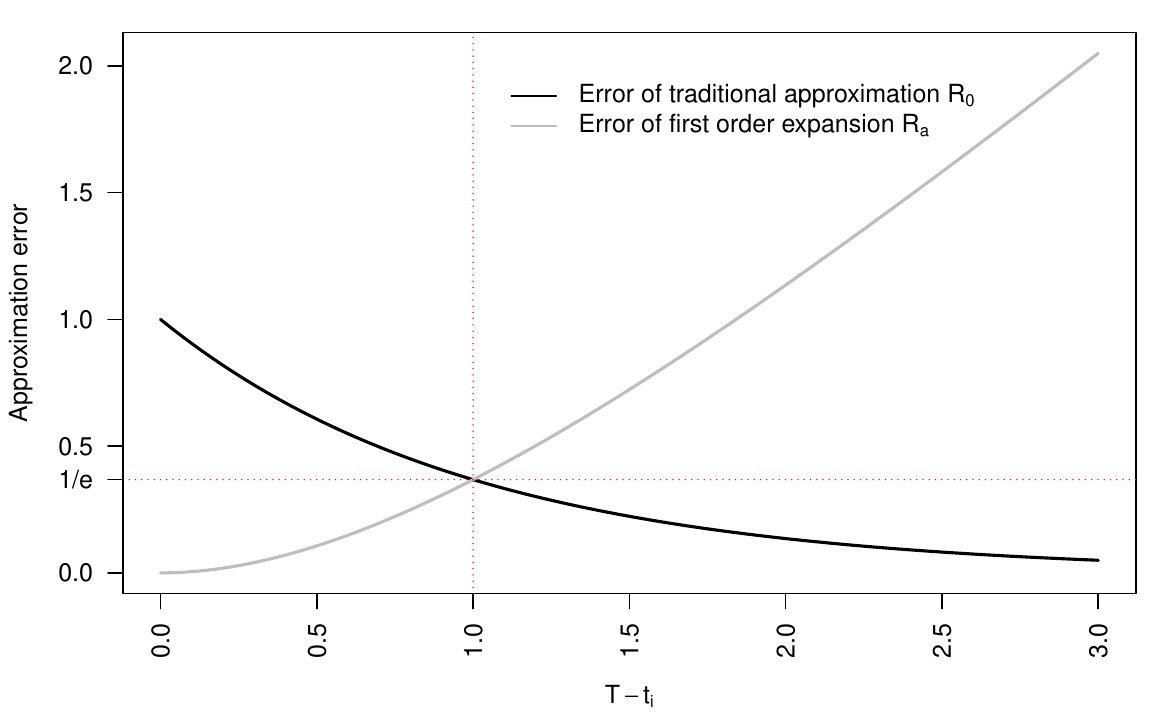}
	\caption{Trade off between approximation errors for the compensator for $\beta_{k,\ell} = 1$.\label{fig:approxerror}}
\end{figure*}

The integral $\int_{0}^T \lambda_{\ell}(s) d s$ is known as the compensator for the conditional density function $\lambda_{\ell}(t)$. The compensator guarantees that there are infinitely many `none events' between observations on the finite temporal region $[0,T]$ \citep{mei2017neural}.  The form of the compensator causes a number of computational challenges for designing scalable inference algorithms for MHP models (for a discussion see \citealp{Lewis11}, \citealp{schoenberg2013facilitated}, \citealp{holbrook2021scalable}). A common approach used to avoid these challenges is to use the approximation technique introduced in \cite{Lewis11}: 
\begin{equation}\label{eq:ass}
    \alpha_{k,\ell}\left[1 - \exp \left(-\beta_{k,\ell}\left(T-t_i\right)\right)\right] \approx \alpha_{k,\ell},
\end{equation}
for all $k, \ell = 1, \dots , K$. {\color{black} Clearly, this approximation is quite accurate for large values $T-t_i$.  Therefore, if the process is stationary and $T$ is ``large'' (ensuring that most observation are far away from $T$), the error $R_0 (T-t_i) = \exp\left\{-\beta_{k,\ell}(T-t_i)\right\}$ is negligible for most $i$.
However, for observations that are ``close'' to the boundary $T$, the approximation is quite poor.  This will be an important consideration when designing algorithms for the MHP model that rely on subsamples because, in that setting, the equivalent quantity to $T$ will be small and the edge effects cannot necessarily be assumed to be negligible.


To address this issue, we propose to use a different approximation for observations that are close to $T$. This alternative approximation is based on a first order Taylor expansion of the exponential function, so that 
\begin{align}\label{eq:corr}
\alpha_{k,\ell}\left[1 - \exp \left\{-\beta_{k,\ell}\left(T-t_i\right)\right\}\right] \approx \alpha_{k,\ell} \left(1 - \left[ 1 - \beta_{k,\ell}(T - t_i)  \right] \right) = \alpha_{k,\ell} \beta_{k,\ell}(T - t_i)
\end{align}

The error associated with this approximation is $R_a (T-t_i) = \sum_{n=2}^{\infty} \frac{ (-\beta_{k,\ell}(T-t_i))^n }{n!} = \exp \left\{-\beta_{k,\ell}\left(T-t_i\right)\right\} - 1 + \beta_{k,\ell}(T-t_i)$.  Hence, as opposed to \eqref{eq:ass}, the approximation in \eqref{eq:corr} is accurate for small $T - t_i$.  Furthermore, $R_a$ is smaller than $R_0$ if and only if $T-t_i > 1/\beta_{k,\ell}$ (see Figure \ref{fig:approxerror}).  This suggests dividing $\mathbf{X}$ into two parts, based on whether the data points are observed within a predetermined threshold $(T-\delta, T]$ where $\delta = 1/\beta_{k,\ell}$, so that:
\begin{align}
\label{corr2}
	\alpha_{k,\ell}\left[\sum_{d_i=k} \left(1 - \exp \left\{-\beta_{k,\ell}\left(T-t_i\right)\right\}\right)\right] \approx   \alpha_{k,\ell} \left[ n_k - \sum_{0 \leq T-t_i < \delta, d_i = k } \left[1 - \beta_{k,\ell}(T - t_i) \right]  \right],	
\end{align}
for all $k, \ell = 1, \dots , K$. We call this the boundary-corrected approximation. 
One of its key advantages that it still allows us to exploit conjugacies in the definition of the model while providing a more accurate approximation for observations close to $T$. Please see Section \ref{sec:compmethods} for additional details. 

Note that, under our proposed approximation, the worst-case absolute error in the likelihood is bounded above by $n_{k,\ell}/e$.  Furthermore, the approximation error is guaranteed to always be smaller than that from the original approximation in \cite{Lewis11}.  Not only that, but it can also be shown that the average loss of information due to the approximation is negligible as $T\to \infty$ (see Appendix \ref{ap:theorem}).}

\subsection{Prior distributions}
\label{sec:hyperpar}

Bayesian inference for the MHP requires that we specify priors for the unknown parameters  $(\boldsymbol{\alpha}, \boldsymbol{\beta}, \boldsymbol{\mu})$. For the baseline intensities we set
\begin{align*}
\mu_{\ell} \mid a_{\ell}, b_{\ell} &  \stackrel{i . i . d}{\sim} \operatorname{Gamma}\left(a_{\ell}, b_{\ell}\right), 
\end{align*} 
which is conditionally conjugate given the branching structure $\mathbf{B}$.  Similarly, under the exponential decay functions we use
\begin{align*}
\alpha_{k,\ell} \mid e_{k,\ell}, f_{k,\ell} & \stackrel{i . i . d}{\sim} \operatorname{Gamma}\left(e_{k,\ell}, f_{k,\ell}\right),  \\
\beta_{k,\ell} \mid w_{k,\ell}, s_{k,\ell} & \stackrel{i . i . d}{\sim} \operatorname{Gamma} \left(w_{k,\ell}, s_{k,\ell}\right)
\end{align*} 
which are also conditionally conjugate.  

\section{Computational methods}\label{sec:compmethods}

\subsection{Preliminaries}\label{prelim}

In this section, we describe three stochastic gradient algorithms for MHP models based on the EM, variational inference and an Markov chain Monte Carlo algorithm based on Langevin dynamics algorithm, respectively. Before delving into the details of each algorithm, we discuss three issues that are relevant to the design of all of them.

The first issue refers to how to define the subsamples used to compute the gradient at each iteration.  A common approach for regression models is to randomly select independent observations.  However, the temporal dependence in event sequences makes this approach inappropriate for MHP models. Instead, our subsamples consist of all observations contained in the random interval $[T_0, T_0 + \kappa T]$, where we uniformly sample $T_0$ on $[0, (1-\kappa)T]$ and $\kappa \in (0,1]$ corresponds to the relative size of the subsample. Similar strategies have been applied to developing stochastic gradient variational algorithms for hidden Markov models \citep{foti2014stochastic} and stochastic block models \citep{gopalan2012scalable}.

The second issue relates to the selection of the learning rate $\rho_r$ for the algorithms, which controls how fast the information from the stochastic gradient accumulates. It is well known (e.g., see \citealp{Robbins51}) that the following conditions lead to convergence towards a local optima:
\begin{align}
	\sum_{r=1}^{\infty} \rho_r &= \infty, & \sum_{r=1}^{\infty} \rho_r^2 &< \infty.
\end{align}
In the following analysis, we apply the commonly used update schedule for $\rho_r$, outlined in \cite{welling2011}:
\begin{align}\label{eq:stepsize}
	\rho_{r}=\rho_0(r+\tau_1)^{-\tau_2},
\end{align}
where $\rho_0$ is a common scaling factor, $\tau_2 \in (0.5,1]$ is the {forgetting rate} that controls the exponential decay rate, and $\tau_1 \geq 0$ is the {delay parameter} that downweights early iterations. In our numerical experiments, we investigate the impact of specific choices of  $\tau_1$ and $\tau_2$ on the results.

The third issue relates to the use of approximation techniques. We propose to use \eqref{corr2} to approximate the likelihood only for the stochastic gradient EM and variational inference algorithms since (conditional) conjugacy is important for developing these algorithms. In contrast, for stochastic gradient Langevin dynamics, we propose to use the exact likelihood in \eqref{eq:obslik}.  For simplicity, we will refer to the algorithms with approximation as their `boundary-corrected' versions, and we only show the update formula based on the `common approximation approach' in this section. Additionally, we want to point out that the exponential decay function that we are using allows us to update 
$\boldsymbol{\mu}$ and $\boldsymbol{\alpha}$ using the exact likelihood formula, and we will only consider the approximation when we update $\boldsymbol{\beta}$.

\subsection{Stochastic gradient EM algorithm for posterior mode finding}
\label{sec:SGEM}

The expectation-maximization (EM) algorithm \citep{dempster1977maximum} is an iterative maximization algorithm that is commonly used for latent variable models, especially in cases where knowledge of the latent variables simplifies the likelihood function. {For Bayesian models, it can be used for maximum \textit{a posteriori} probability estimation for the model parameters.} Let $\mathbf{X}$ be the observed dataset of size $N$, $\boldsymbol{\theta}$ be the set of model parameters to be estimated, and $\mathbf{B}$ be the set of latent branching structure variables, and denote $(\mathbf{X}, \mathbf{B})$ as the complete dataset. We further assume that the distribution of the complete dataset belongs to the following exponential family:  
\begin{align}
\label{ass1}
	l(\mathbf{X}, \mathbf{B} \mid \boldsymbol{\theta}) = A(\mathbf{X}, \mathbf{B})
	 \exp \left(\boldsymbol{\phi}(\boldsymbol{\theta})^\intercal \boldsymbol{s}(\mathbf{X}, \mathbf{B}) - \psi(\boldsymbol{\theta})  \right),
\end{align}
where $\boldsymbol{s}(\mathbf{X}, \mathbf{B})$ is the vector of sufficient statistics for the complete data model and $\boldsymbol{\phi}(\boldsymbol{\theta})$ is the canonical form of the vector of parameters and $\boldsymbol{\phi}(\cdot)$\ represents a one-to-one transformation. 

In the context of Bayesian models, the EM algorithm can be used to obtain the maximum a posteriori (MAP) estimate (e.g., see \citealp{logothetis1999}). Starting from an initial guess of the model parameters $\boldsymbol{\theta}^{(0)}$, the EM algorithm alternatively carries out the following two steps until convergence: 
\begin{itemize}
    \item In the `E-step', the algorithm estimates the ``marginal'' sufficient statistics based on the expected value $\hat{\boldsymbol{s}}^{(r)} :=  \mathrm{E}_{\mathbf{B} \mid \mathbf{X}, \boldsymbol{\theta}^{(r)}} \left[\boldsymbol{s}(\mathbf{X}, \mathbf{B})\right]$.

    \item  In the `M-step', the algorithm updates the model parameter as the maximizer of the $Q$ function: 
\begin{align*}
	\boldsymbol{\theta}^{(r+1)} = \arg \min_{\boldsymbol{\theta}}\left[  
\boldsymbol{\phi}(\boldsymbol{\theta})^\intercal \hat{\boldsymbol{s}}^{(r)} + \log p( \boldsymbol{\theta})\right].
\end{align*}
where $p( \boldsymbol{\theta})$ denotes the prior on $ \boldsymbol{\theta}$.
\end{itemize}

Note that the expectation calculation in the E-step update requires a pass through the whole dataset.  As we discussed in the introduction, this can be challenging in very large datasets. The stochastic gradient EM (SGEM) algorithm \citep{Cappe09} addresses this challenge by approximating the marginal sufficient statistics with an estimate based on randomly sampled mini-batches. We let $\mathbf{X}^{(r)}$ denote a subsample of size $n$ (and respectively, we let $\mathbf{B}^{(r)}$ be the set of branching structure that corresponds to the selected subsample). For the stochastic E-step, the SGEM updates the estimated sufficient statistics $\hat{\boldsymbol{s}}^{(r+1)}$ as a linear combination of the previous update and a new estimate of the sufficient statistics based on the random subsample and the current model parameter:
\begin{align*}
    \hat{\boldsymbol{s}}^{(r+1)} =(1-\rho_r)\hat{\boldsymbol{s}}^{(r)} + \rho_r \kappa^{-1}\mathrm{E}_{\mathbf{B}^{(r+1)} \mid \mathbf{X}^{(r+1)}, \boldsymbol{\theta}^{(r)}}[\boldsymbol{s}(\mathbf{X}^{(r+1)}, \mathbf{B}^{(r+1)})].
\end{align*}
where $\rho_r$ is given in \eqref{eq:stepsize}.  Because of the way we select the subsamples, $\kappa^{-1}\mathrm{E}_{\mathbf{B}^{(r+1)} \mid \boldsymbol{X}^{(r+1)}, \boldsymbol{\theta}^{(r)}}[\boldsymbol{s}(\mathbf{X}^{(r+1)}, \mathbf{B}^{(r+1)})]$ is an unbiased estimate of the sufficient statistics of the model based on the whole dataset. In the following M-step, the SGEM algorithm maximizes the $Q$ function 
\begin{align*}
	\boldsymbol{\theta}^{(r+1)}=\arg \max _{\boldsymbol{\theta}}\left[\boldsymbol{\phi}(\boldsymbol{\theta})^\intercal\hat{\boldsymbol{s}}^{(r+1)}+\log p(\boldsymbol{\theta})\right].
\end{align*}


In the case of the MHP model with exponential excitation functions, $\boldsymbol{\theta}^{(r)} = (\boldsymbol{\mu}^{(r)}, \boldsymbol{\alpha}^{(r)}, \boldsymbol{\beta}^{(r)})$ and the expectation in the E-step is computed with respect to the probabilities
\begin{align}
\label{eq:p}
	p^{(r)}_{i,j}:= p\left(\mathbf{B}^{(r)}_{i,j}=1,\mathbf{B}^{(r)}_{i,-j}=0 \mid \boldsymbol{\mu}^{(r)}, \boldsymbol{\alpha}^{(r)}, \boldsymbol{\beta}^{(r)}, \mathbf{X}^{(r)}\right) \propto 
	\begin{cases}
	{\mu^{(r)}_{d_i} }~ & \text{if }  j = i,   \\ 
	{\alpha^{(r)}_{d_j,d_i}\beta^{(r)}_{d_j,d_i}\exp(-\beta^{(r)}_{d_j,d_i}(t_i-t_j)) } & \text{if }  j < i, \\
    0 & \text{if }  j > i.
	\end{cases}
\end{align}
for $i = 2, \dots, n$, the negative subindex stands for all other possible except the one, and $p_{1,1}^{(r)}:= 1$. Then, the vector of expected sufficient statistics of the complete data likelihood evaluated at iteration $r$  
$$
\left(s_{\mu ,\ell,1}^{(r)}, s_{\mu ,\ell,2}^{(r)}, s_{\alpha, k,\ell,1}^{(r)}, s_{\alpha ,k,\ell,2}^{(r)}, s_{\beta, k,\ell,1}^{(r)}, s_{\beta ,k,\ell,2}^{(r)}\right),
$$
is updated as
\begin{align*}
s_{\mu, \ell,1}^{(r+1)} &= (1-\rho_r)s_{\mu,\ell,1}^{(r)} + \rho_r \kappa^{-1} \sum_{d_i = \ell} p_{i,i}^{(r)},\\	
	s_{\mu ,\ell,2}^{(r+1)} &= T, \\
	s_{\alpha,k,\ell,1}^{(r+1)} &= (1-\rho_r)s_{\alpha, k,\ell,1}^{(r)} + \rho_r \kappa^{-1}\sum_{d_i=\ell} \sum_{\substack{d_{j}=k \\ j<i}} p_{i,j}^{(r)},\\ 
	s_{\alpha,k,\ell,2}^{(r+1)} &= (1-\rho_r)s_{\alpha,k,\ell,1}^{(r)} + \kappa^{-1}\left(n_j^{(r)} - \sum_{d_j = k}\exp \left( - \beta_{k,l}^{(r)} \left(\kappa T - t_j \right) \right)\right) ,\\
    s_{\beta, k,\ell,1}^{(r+1)} &=
        (1-\rho_r)s_{\beta ,k,\ell,1}^{(r)} + \rho_r \kappa^{-1} \sum_{d_i=l} \sum_{\substack{d_{j}=k \\ j<i}}p_{i,j}^{(r)} \\
    s_{\beta, k,\ell,2}^{(r+1)} &= (1-\rho_r)s_{\beta ,k,\ell,2}^{(r)} + \rho_r\kappa^{-1}\sum_{d_i=l} \sum_{\substack{d_{j}=k \\ j<i}} p_{i,j}^{(r)}\left(t_{i}-t_{j}\right),
\end{align*}
where $n_j^{(r)}$ denotes the number of events on dimension $j$ in $\mathbf{X}^{(r)}$. Finally, in the M-step, the value of the parameters is updated as:
\begin{align*}
{\alpha}_{k,\ell}^{(r+1)} &= \frac{s_{\alpha ,k,\ell,1}^{(r+1)} +e_{ k,\ell}-1}{s_{\alpha ,k,\ell, 2}^{(r+1)} + f _{k,\ell}}, &
{\beta} _{k,\ell}^{(r+1)} & = \frac{s_{\beta, k,\ell,1}^{(r+1)} + w_{k,\ell}-1}{s_{\beta ,k,\ell,2}^{(r+1)}+s_{k,\ell}} , &
{\mu}_{\ell}^{(r)} &=\frac{s_{\mu, \ell,1}^{(r+1)}+a_{\ell}-1}{s_{\mu ,\ell,2}^{(r+1)}+b_{\ell}}.
\end{align*}
We repeat the steps above until the convergence criterion is reached.

\subsection{Stochastic Gradient Variational Inference}\label{sec:SGVI}

Variational inference \citep{wainwright2008graphical} is an approximate inference method that replaces the posterior distribution with an approximation that belongs to a tractable class.  More specifically, the variational approximation $q_{\boldsymbol{\eta}}(\boldsymbol{\theta},\mathbf{B})$, $\boldsymbol{\eta} \in H$ to the posterior distribution $p(\boldsymbol{\theta}, \mathbf{B}  \mid \mathbf{X})$ is obtained through maximizing the evidence lower bound (ELBO), which is equivalent to setting 
\begin{align}\label{eq:KLvar}
\boldsymbol{\eta} = \arg\max_{\tilde{\boldsymbol{\eta}} \in H}  \operatorname{E}_{q_{\tilde{\boldsymbol{\eta}}}} \log \left\{ \frac{p(\boldsymbol{\theta}, \mathbf{B} , \mathbf{X})}{q_{\tilde{\boldsymbol{\eta}}}(\boldsymbol{\theta},\mathbf{B})} \right\} .
\end{align}

The class of variational approximations most used in practice is the class of mean-field approximations \citep{bishop2006pattern}, where  model parameters are taken to be independent from each other under the variational distribution, i.e., $q_{\boldsymbol{\eta}}(\boldsymbol{\theta},\mathbf{B}) = \prod_{j}q_{\boldsymbol{\eta}_{\theta_j}}(\theta_j)\prod_{i}q_{\boldsymbol{\eta}_{\mathbf{B}_i}}(\mathbf{B}_i)$.  If both the full conditional posterior distributions and the corresponding variational distribution belong to the same exponential family, e.g., if
\begin{align*}
  p\left(\theta_j \mid \boldsymbol{\theta}_{-j},  \mathbf{B}, \mathbf{X}\right) & =A\left(\theta_j\right) \exp \left\{\theta_j s_j\left(\boldsymbol{\theta}_{-j}, \mathbf{B}, \mathbf{X}\right)-\psi\left(\boldsymbol{\theta}_{-j}, \mathbf{X}\right)\right\} , &
  q_{\boldsymbol{\eta}_{\theta_j}}(\theta_j) &= A(\theta_j) \exp \left\{  \theta_j s_i\left(\boldsymbol{\eta}_{\boldsymbol{\theta}_j}\right)  - \psi\left(\boldsymbol{\eta}_{\boldsymbol{\theta}_j}\right) \right \} ,
\end{align*}
  %
\cite{Blei06} showed that the coordinate ascent algorithm for the mean-field variational inference updates the variational parameters by setting $\boldsymbol{\eta}_{\theta_j}^{(r+1)} = \operatorname{E}_{q_{\boldsymbol{\eta^{(r)}}}} \left[s_{j}\left(\boldsymbol{\theta}_{-j}, \mathbf{B}, \mathbf{X}\right)\right]$.  A similar result applies to the updates of the variational parameters $\boldsymbol{\eta}_{B_i}$.

Stochastic gradient variational inference (SGVI) \citep{Hoffman13} is a variant of variational inference that replaces the gradient computed over the whole sample with the one calculated over a random subsample $\mathbf{X}^{(r)}$ of size $n$ selected during iteration $r$.  Under conjugacy, SGVI then updates the vector $\boldsymbol{\eta}_\mathbf{B}$ in iteration $r$ by setting
\begin{align*}
	\eta^{(r+1)}_{\mathbf{B}_i^{(r)}} = \operatorname{E}_{q_{\eta^{(r)}}}\left[\tilde{s}_i\left( \mathbf{B}^{(r)}_{-i}, \boldsymbol{\theta},  \mathbf{X}^{(r)} \right)\right], 
\end{align*}
where $\tilde{s}_i\left( \mathbf{B}^{(r)}_{-i}, \boldsymbol{\theta},  \mathbf{X}^{(r)} \right)$ is the sufficient statistics associated with the block $\mathbf{B}_i$, and $\boldsymbol{\eta}_{\boldsymbol{\theta}}$ through the recursion 
\begin{align*}
	\eta_{\boldsymbol{\theta}_j}^{(r+1)} = (1-\rho_r)\eta_{\boldsymbol{\theta}_j}^{(r)} + \rho_r\hat{\eta}^{(r+1)}_{\boldsymbol{\theta}_j} ,
\end{align*}
where $\hat{\eta}^{(r+1)}_{\boldsymbol{\theta}_j} = \operatorname{E}_{q_{\eta^{(r+1)}}}\left[ s_j(\boldsymbol{\theta}_{-j}, \mathbf{B}^{(r)}, \mathbf{X}^{(r)}) \right]$.  In the specific case of the MHP with exponential excitation functions we have $\boldsymbol{\theta} = (\boldsymbol{\mu}, \boldsymbol{\alpha}, \boldsymbol{\beta})$, $\boldsymbol{\eta} = \left(\boldsymbol{\eta}_{\boldsymbol{\alpha}}, \boldsymbol{\eta}_{\boldsymbol{\beta}}, \boldsymbol{\eta}_{\boldsymbol{\mu}}, \boldsymbol{\eta}_{\boldsymbol{B}}\right)$ and 
\begin{align*}
	q_{\boldsymbol{\eta}}(\boldsymbol{\alpha}, \boldsymbol{\beta}, \boldsymbol{\mu}, \mathbf{B}) &= \prod_{i=1}^N q_{\boldsymbol{\eta}_{\mathbf{B}_i}}(\mathbf{B}_i) \prod_{k=1}^K q_{\boldsymbol{\eta}_{\boldsymbol{\mu_k}}}(\mu_k) \prod_{j=1}^K \prod_{k=1}^K q_{\boldsymbol{\eta}_{\boldsymbol{\alpha_{k,\ell}}}}(\alpha_{k,\ell}) q_{\boldsymbol{\eta}_{\boldsymbol{\beta_{k,\ell}}}}(\beta_{k,\ell})  ,
\end{align*}
where
%
%
$\alpha_{k,\ell} \sim 
\operatorname{Gamma}(\eta_{\alpha ,k,\ell,1}, \eta_{\alpha ,k,\ell,2})$, $\beta_{k,\ell} \sim 
\operatorname{Gamma}(\eta_{\beta ,k,\ell,1}, \eta_{\beta ,k,\ell,2})$, $\mu_{\ell} 
 \sim 
\operatorname{Gamma}(\eta_{\mu ,\ell,1}, \eta_{\mu ,\ell,2})$, $\mathbf{B}_{i}$ denotes the $i$-th row of the matrix $\mathbf{B}$, and $\mathbf{B}_i$ follows a categorical distribution with parameter $\boldsymbol{\eta}_{\mathbf{B}_{i}}$. 
Hence, each iteration of the SGVI algorithm starts by updating the variational parameter for the local branching structure through the following formula:
\begin{align*}
\eta_{B^{(r)}_{i j}} &\propto \begin{cases}
\exp \left\{ \psi\left( \eta^{(r)}_{\mu, d_i ,1}\right) - \log \left(\eta^{(r)}_{\mu ,d_i, 2}\right)\right\} & j = i\\
\exp \left\{\Psi_{ij} - \log \left(\eta^{(r)}_{\alpha, d_j, d_i ,2} \right) - \log \left(\eta^{(r)}_{\beta, d_j, d_i ,2} \right) \right\} & j < i,  \\
0 & j > i, \\
\end{cases} 
\end{align*}
where $\Psi_{ij} = \psi\left(\eta^{(r)}_{\alpha, d_j,d_i,1}\right)+\psi\left(\eta^{(r)}_{\beta ,d_j,d_i,1}\right)-\frac{\eta^{(r)}_{\beta, d_j,d_i,1}}{\eta^{(r)}_{\beta ,d_j,d_i,2}}\left(t^{(r)}_{i}-t^{(r)}_{j}\right)$. In this expression, $\psi(x) = \frac{\mathrm{d}}{\mathrm{d} x} \ln \Gamma(x)$ denotes the digamma function, and $(t_i^{(r)},t_j^{(r)})$ represents the $i$-th and $j$-th event in $\mathbf{X}^{(r)}$. 
Then, we update the rest of the variational parameters as:
\begin{align*}
	{\eta}_{\alpha_{k,\ell,1}}^{(r+1)} &= (1-\rho_r){\eta}_{\alpha_{k,\ell,1}}^{(r)} + \rho_r\left(\kappa^{-1} \sum_{d_i = \ell} \sum_{\substack{d_{j}=k \\ j <i }}\eta_{B^{(r)}_{ij}}+e_{ k,\ell}\right), \\ 
	{\eta}_{\alpha_{k,\ell,2}}^{(r+1)} &= (1-\rho_r){\eta}_{\alpha_{k,\ell,2}}^{(r)} + \rho_r\left(\kappa^{-1} \left(n_{k}^{(r)} - \sum_{d_j = k} \left(1 + \frac{\kappa T-t_j}{\eta_{\beta_{k, \ell, 2}^{(r+1)}}}\right)^{-\eta_{\beta_{k,l,1}}^{(r+1)}}\right)+f_{k,\ell}\right),\\
    {\eta}_{\beta_{k,\ell,1}}^{(r+1)} &= (1-\rho_r){\eta}_{\beta_{k,\ell,1}}^{(r)} + \rho_r \left(\kappa^{-1}\sum_{d_i = \ell} \sum_{\substack{d_{j}=k \\ j <i }}\eta_{B^{(r)}_{ij}}+r _{k,\ell}\right), \\ 
    {\eta}_{\beta_{k,\ell,2}}^{(r+1)} &= (1-\rho_r){\eta}_{\beta_{k,\ell,2}}^{(r)} + 
      \rho_r \left(\kappa^{-1}\sum_{d_i = \ell} \sum_{\substack{d_{j}=k \\ j <i }}\eta_{B^{(r)}_{ij}}(t_i^{(r)} - t_{j}^{(r)})+s _{k,\ell}\right),\\
      {\eta}_{\mu_{\ell,1}}^{(r+1)} &= (1-\rho_r){\eta}_{\mu_{\ell,1}}^{(r)} + \rho_r\left(\kappa^{-1}\sum_{d_i=\ell} \eta_{B^{(r)}_{ii}}+a_{\ell}\right),\\ 
	{\eta}_{\mu_{\ell,2}}^{(r+1)} &= T + b_{\ell}.
\end{align*}
These updates are repeated until convergence.

\subsection{Stochastic Gradient Langevin Dynamics}\label{sec:sgld}

%
%

Unlike the previous two sections, here we focus on inference methods that are based on the observed data likelihood  \eqref{eq:obslik} instead of the complete data likelihood \eqref{eq:complik}. Specifically, we consider simulation methods that rely on Langevin dynamics (LD)  \citep{Neal2011}, a class of MCMC methods that are based on the discretization of a continuous-time stochastic process whose equilibrium distribution is the desired posterior distribution. Compared to simple random walk MCMC algorithms, LD algorithms explore the parameter space much more efficiently because they use information about the gradient of the likelihood to guide the direction of the random walk. In particular, LD methods proposes new values for the parameter according to 
\begin{align}\label{eq:LDupdate}
\boldsymbol{\theta}^{*} = \boldsymbol{\theta}^{(r)} - \frac{\rho}{2} \left. \nabla_{\boldsymbol{\theta}} U\left( \boldsymbol{\theta} \mid \mathbf{X} \right) \right|_{\theta = \boldsymbol{\theta}^{(r)}} + \sqrt{\rho} \epsilon_{r+1},    
\end{align}
where $\rho$ is the step size used to discretize the Langevin diffusion, $U\left( \boldsymbol{\theta} \mid \mathbf{X} \right) = - \log p(\mathbf{X} \mid \boldsymbol{\theta}) - \log p(\boldsymbol{\theta})$ is the negative logarithm of the unnormalized posterior of interest, and  $\epsilon_{r+1}$ is drawn from a standard multivariate normal distribution.  If no discretization of the Langevin diffusion was involved, then this proposed valued would come from the correct stationary distribution.  However, the introduction of the discretization means that a correction is required. Hence, values proposed according to \eqref{eq:LDupdate} are accepted with probability
\begin{align}\label{eq:accpetLD}
\min \left\{ 1, \frac{ \exp \left\{ - U(\boldsymbol{\theta}^{*} \mid \mathbf{X}) \right\}}{ \exp \left\{ - U(\boldsymbol{\theta}^{(r)} \mid \mathbf{X})\right\}} \right\}.    
\end{align}
If accepted, then $\boldsymbol{\theta}^{(r+1)} = \boldsymbol{\theta}^{*}$.  Otherwise,  $\boldsymbol{\theta}^{(r+1)}=\boldsymbol{\theta}^{(r)}$.

The stochastic gradient Langevin Dynamics (SGLD) algorithm \citep{welling2011, chen2014stochastic} replaces the likelihood computed over the whole sample with (an appropriately rescaled) likelihood evaluated on a random subsample  $\mathbf{X}^{(r)}$.  SGLD also uses a decreasing stepsize $\rho_r$ to construct the discretization of the Langevin diffusion in step $r$ of the algorithm and ignores the correction step in \eqref{eq:accpetLD}.  This leads to updates of the form 
\begin{align}
\label{eq:5}
    \boldsymbol{\theta}^{(r+1)} = \boldsymbol{\theta}^{(r)} - \frac{\rho_r}{2} \left. \nabla_{\boldsymbol{\theta}} \tilde{U}(\boldsymbol{\theta} \mid \mathbf{X}^{(r)}) \right|_{\theta = \boldsymbol{\theta}^{(r)}} + \sqrt{\rho_r} \epsilon_{r+1},  
\end{align}
where $\tilde{U}(\boldsymbol{\theta} \mid \mathbf{X}^{(r)}) = \kappa^{-1} \log p\left(\mathbf{X}^{(r)} \mid \boldsymbol{\theta}\right) + \log p\left( \boldsymbol{\theta} \right)$.

In the case of the MHP model with exponential excitation functions, we perform a logarithmic transformation on the model parameters before implementing the SGLD, so that $\boldsymbol{\xi}_{\boldsymbol{\alpha}} = \log \boldsymbol{\alpha}$, $\boldsymbol{\xi}_{\boldsymbol{\beta}} = \log \boldsymbol{\beta}$ and $\boldsymbol{\xi}_{\boldsymbol{\mu}} = \log \boldsymbol{\mu}$.  Then, the gradients become:
\begin{equation*}
\begin{aligned}
    \nabla_{\xi_{\alpha_{k,\ell}}}^{(r)} U\left(\boldsymbol{\xi} \right)&=-\sum_{d_i =\ell} \frac{\alpha_{k,\ell}^{(r)}\beta_{k,\ell}^{(r)} \sum_{\substack{d_j = k , j < i}} \exp  \left(-\beta_{k,\ell}^{(r)}\left(t^{(r)}_{i}-t^{(r)}_{j}\right)\right)}{\mu_{\ell}^{(r)}+\alpha_{k,\ell}^{(r)}\beta_{k,\ell}^{(r)} \sum_{\substack{d_j = k , j < i}} \exp  \left(-\beta_{k,\ell}^{(r)}\left(t^{(r)}_{i}-t^{(r)}_{j}\right)\right)} \\ &+\alpha_{k,\ell}^{(r)}\left(n_k^{(r)} - \sum_{d_j=k} \exp \left(-\beta_{k, l}^{(r)}\left(\kappa T-t_j\right)\right) + f_{k,\ell}\right) - e_{k,\ell}, \\
    \nabla_{\xi_{\beta_{k,\ell}}}^{(r)} U\left(\boldsymbol{\xi} \right)&=-\sum_{d_i =\ell} \frac{\alpha_{k,\ell}^{(r)}\beta_{k,\ell}^{(r)} \sum_{d_j = k, j<i} \left(1-\beta_{k, \ell}^{(r)}\left(t^{(r)}_{i}-t^{(r)}_{j}\right)\right)\exp  \left(-\beta_{k,\ell}^{(r)}\left(t^{(r)}_{i}-t^{(r)}_{j}\right)\right)}{\mu_{\ell}^{(r)}+\alpha_{k,\ell}^{(r)}\beta_{k,\ell}^{(r)} \sum_{d_j = k, j<i} \exp  \left(-\beta_{k,\ell}^{(r)}\left(t^{(r)}_{i}-t^{(r)}_{j}\right)\right)} \\ &+\sum_{d_j = k} \alpha_{k,l}^{(r)}(\kappa T - t_j)\exp \left(-\beta_{k, l}^{(r)}\left(\kappa T-t_j\right)\right)   -r_{k,\ell}+s_{k,\ell} \beta^{(r)}_{k,\ell}, \\
    \nabla_{\xi_{\mu_{k,\ell}}}^{(r)} U\left(\boldsymbol{\xi} \right)&=-\sum_{d_i=\ell} \frac{\mu_{\ell}^{(r)}}{\mu_{\ell}^{(r)}+\alpha_{k, \ell}^{(r)} \beta_{k, \ell}^{(r)} \sum_{\substack{d_j=k ,j<i}} \exp \left(-\beta_{k, l}^{(r)}\left(t_i^{(r)}-t_j^{(r)}\right)\right)}+\mu^{(r)}_{\ell} (b_{\ell}+\kappa T)-a_{\ell}.
\end{aligned}
\end{equation*}
Note that SGLD does not require approximating the observed data likelihood.

\section{Simulation studies}
\label{sec:Exp}

In this section, we conduct a series of simulations to understand the performance of the algorithms with and without time budget constraints.  Compared with small-scale learning problems, large-scale problems are subject to a qualitatively different tradeoffs involving the computational complexity of the underlying algorithm \citep{bottou2007tradeoffs}, making evaluation under time constraints key. We also investigate the model fitting performance of all algorithms under different subsampling ratios.

\subsection{Experimental setting}
\label{sec:syn}

\paragraph{Data generation mechanism.} 

{\color{black} For most of our experiments,} data is generated from the multivariate Hawkes process model presented in section \ref{sec:mhp} with $K=3$ dimensions and the following parameter settings:
\begin{align*}
	\boldsymbol{\alpha} &= \begin{bmatrix}
		0.3 & 0.3 & 0.3 \\
		0.3 & 0.3 & 0.3 \\
		0.3 & 0.3 & 0.3 
	\end{bmatrix},  & 
	\boldsymbol{\beta} & = \begin{bmatrix}
		4 & 4 & 4 \\
		4 & 4 & 4 \\
		4 & 4 & 4 
	\end{bmatrix},  & 
	\boldsymbol{\boldsymbol{\mu}} = & \begin{bmatrix}
 0.5 \\ 0.5 \\ 0.5
 \end{bmatrix}.
\end{align*}

{\color{black} In addition, for our last set of experiments, we also consider an alternative data generation mechanism for a 10-dimensional Hawkes processes with varying degrees of sparsity on the matrix $\boldsymbol{\alpha}$. More specifically, we let have $\mu_\ell = 0.1, \beta_{k, \ell} = 4$  for all $k, \ell = 1, \dots, 10$ and  consider three scenarios (`high', `medium' and `low' sparsity), under which 90\%, 80\% and 50\% of the off-diagonal elements of $\boldsymbol{\alpha}$ are set to zero, indicating no interaction between the dimension pairs. The remaining off-diagonal elements of $\boldsymbol{\alpha}$ are set to the value 0.1, and the diagonal elements to the value 0.4.}

\paragraph{Algorithms to be compared.}

We compare the performances of SGLD, versions of SGEM, SGVI that use the standard approximation of \cite{Lewis11}, and the boundary-corrected versions of SGEM and SGVI (SGEM-c and SGVI-c). Also, as a `gold-standard' that does not involve subsampling, we implemented full MCMC and its boundary-corrected version (MCMC-c). {\color{black} Additionally, we benchmark our methods against two frequentist computational methods for MHP: the nonparametric estimation based on EM algorithm and piecewise basis kernels (EM-BK, see \citealp{zhou2013}) and the maximum likelihood estimation for multidimensional exponential Hawkes process with both excitation and inhibition effects (MLE-I, see \citealp{bonnet2023inference}).}

\paragraph{Parameters.} For the model hyperparameters from Section \ref{sec:hyperpar}, we let $a_{\ell} = 2, b_{\ell} = 4, e_{k,\ell} = 2, f_{k,\ell} = 4, r_{k,\ell} = 2, s_{k,\ell} = 0.5$ for $k, \ell = 1, \dots, K$. We simulate $K_d = 50$ datasets for $T = 1000$. For every dataset, we start all algorithms at $16$ different initial points to minimize the risk of convergence to a local optimum. For the tuning hyperparameters in stochastic optimization algorithms, we consider several subsampling ratios of $\kappa = \{0.01, 0.05, 0.1, 0.2, 0.3, 0.4\}$ and let $\tau = 1, \kappa = 0.51$. For SGEM and SGVI, we let $\rho_0 = 0.02$, and for SGLD we let $\rho_0 = \frac{0.1}{T\kappa}$. We chose $\delta = 0.25$ as the threshold for boundary-corrected methods.

\paragraph{Performance Metrics for Model Fitting.} 

We consider the observed data likelihood defined in \eqref{eq:obslik} as a measure for model fitting. Denote by $\text{ODL}_{d,\iota}$ the observed data likelihood calculated based on dataset $d$ and initial point $\iota$, we define $\text{BODL}_{d} = \max_{1 \leq \iota \leq 16} \text{ODL}_{d,\iota}$ as the {best-observed data likelihood} (BODL), as a basis for evaluating model performance.  Finally, in order to compare model-fitting performance under different subsampling ratios and different datasets, we propose the following relative best-observed data likelihood (RBODL):
$$
	\text{RBODL}_{d,\kappa_1, \kappa_2} = \frac{\text{BODL}_{d, \kappa_1}}{\text{BODL}_{d, \kappa_2}}
$$
where $\text{BODL}_{d, \kappa_1}, \text{BODL}_{d, \kappa_2}$ are the best-observed data likelihoods on dataset $d$ under subsampling ratio $\kappa_1$ and $\kappa_2$. Additionally, we refer to $\kappa_2$ as the reference subsampling ratio for $\text{RBODL}_{d,\kappa_1, \kappa_2}$.  The RBODL is fairly easy to interpret, in that $\text{RBODL}_{d,\kappa_1, \kappa_2} > 1$ indicates a superior empirical performance of subsampling ratio $\kappa_1$ compared to $\kappa_2$ and vice versa. 

\paragraph{Performance Metrics for Estimation Accuracy.}

We consider performance metrics for both point and uncertainty estimations. 
To evaluate estimation accuracy of the model parameters, we rely on the averaged root mean integrated squared error (RMISE) for $\boldsymbol{\alpha}, \boldsymbol{\beta}$, and use mean absolute error (MAE) for $\boldsymbol{\mu}$ on the log scale: 

\begin{align*}
	\text{RMISE}(\boldsymbol{\alpha}, \boldsymbol{\beta}) :&= \frac{1}{K^2} \sum_{j=1}^K \sum_{k=1}^K \sqrt{\int_0^{+\infty}\left(\phi^{\text{true}}_{j, k}(x)-\hat{\phi}_{j, k}(x)\right)^2 \mathrm{~d} x}, \\
	\text{MAE}(\boldsymbol{\mu}) :&= \frac{1}{K} \sum_{j=1}^K |\log(\mu^{\text{true}}_k) - \log(\hat{\mu}_k)|.
\end{align*}
where $\hat{\mu}_k$ is the point estimator of $\mu_k$ (the posterior mode for the stochastic gradient EM, the posterior mean under the variational approximation for the stochastic gradient variational method, and the posterior mean of the samples after burn-in for the stochastic gradient Langevin dynamics), and $\hat{\phi}_{k,\ell}(x)$ is obtained by plugging in the point estimators for $\alpha_{k,\ell}$ and $\beta_{k,\ell}$ into the exponential decay function. The RMISE is a commonly used metric for nonparametric triggering kernel estimation for MHP models \citep{zhou2020} and collectively evaluates the estimation performance for all model parameters.

We also evaluate the uncertainty estimates generated by the SGVI, SGLD, SGVI-c and SGLD-c models (SGEM provides point estimators, but does not directly provide estimates of the posterior variance).  To do so, we consider the interval score (IS) \citep{gneiting2007strictly} for 95\% credible intervals, which jointly evaluates the credible interval width and its coverage rate. We also separately compute the average coverage rate (ACR), defined as the proportion of correct coverages out of $2K^2 + K$ model parameters and the average credible interval length (AIW) as references.

\subsection{Simulation results}

\begin{table*}[t]
\centering
\begin{tabular}{@{}ccccccc@{}}
\toprule
methods               & running time & 0.05         & 0.1          & 0.2          & 0.3          & 0.4          \\ \midrule
\multirow{3}{*}{SGEM} & 1 min        & \textbf{1.003} (0.003)&1.002 (0.004)&1.000 (0.003)&0.999 (0.004)&0.997 (0.004) \\
                      & 3 min        & \textbf{1.004} (0.005)&1.003 (0.005)&1.002 (0.005)&1.001 (0.006)&1.000 (0.005) \\
                      & 5 min        & 1.003 (0.005) & \textbf{1.004} (0.006)&1.002 (0.006)&1.002 (0.006)&1.001 (0.006) \\ \cmidrule(l){3-7} 
\multirow{3}{*}{SGEM-c} & 1 min        & \textbf{1.003} (0.007)&1.002 (0.006)&1.000 (0.006)&0.999 (0.006)&0.998 (0.006 \\
                      & 3 min        & \textbf{1.003} (0.005)&1.003 (0.006)&1.002 (0.006)&1.001 (0.006)&1.000 (0.006) \\
                      & 5 min        & 1.003 (0.008)& \textbf{1.004} (0.008)&1.003 (0.007)&1.003 (0.008)&1.001 (0.008) \\ \cmidrule(l){3-7} 
\multirow{3}{*}{SGVI} & 1 min        & \textbf{1.004} (0.001)&1.002 (0.001)&1.000 (0.001)&0.997 (0.001)&0.995 (0.001) \\
                      & 3 min        & \textbf{1.005} (0.001)&1.004 (0.001)&1.002 (0.001)&1.001 (0.001)&0.999 (0.001) \\
                      & 5 min        & \textbf{1.005} (0.001)&1.005 (0.001)&1.003 (0.001)&1.002 (0.001)&1.000 (0.001) \\ \cmidrule(l){3-7} 
\multirow{3}{*}{SGVI-c} & 1 min & \textbf{1.002} (0.001)&1.000 (0.001)&0.997 (0.001)&0.995 (0.001)&0.992 (0.001) \\
                      & 3 min        & \textbf{1.002} ($<$0.001)&1.002 (0.001)&1.000 (0.001)&0.998 (0.001)&0.996 (0.001)\\
                      & 5 min        & \textbf{1.002} ($<$0.001)&1.002 (0.001)&1.001 (0.001)&0.999 (0.001)&0.998 (0.001) \\ \cmidrule(l){3-7} 
\multirow{3}{*}{SGLD} & 1 min        & \textbf{1.001} ($<$0.001)&0.996 (0.001)&0.988 (0.002)&0.98 (0.004)&0.968 (0.005) \\
                      & 3 min        & \textbf{1.001} ($<$0.001)&0.998 (0.001)&0.991 (0.001)&0.986 (0.003)&0.977 (0.004) \\
                      & 5 min        & \textbf{1.001} ($<$0.001)&0.999 (0.001)&0.992 (0.001)&0.987 (0.002)&0.980 (0.003) \\ \bottomrule
\end{tabular}
\caption{RBODLs for SGEM, SGVI and SGLD under running times of 1, 3 and 5 minutes for the first data generation mechanism ($K=3$), with $\kappa = 0.01$ being the reference subsampling ratio. Average RBODL across 50 datasets is shown, with standard deviations in the brackets.}
\label{tab:RBODL}
\end{table*}

\paragraph{Optimal subsampling ratios.}  Table \ref{tab:RBODL} shows the RBODLs for all methods subject to three time budgets:  1, 3 and 5 minutes. We choose $\kappa = 0.01$ as the reference subsampling ratio. The results indicate that, except for SGLD run for 5 minutes, all methods reach the highest RBODL at $\kappa = 0.05$. Given that this optimun is greater than 1, this indicates that choosing a subsampling ratio around $0.05$ (rather than the baseline, $0.01$) leads to optimal model-fitting performance under a time budget. For a given method under fixed running time, we observe that the RBODL tends to drop as $\kappa$ increases. This is likely because larger subsamples take considerably more time to process due to the quadratic computational complexity for each iteration. We also observe such drops in RBODL tend to reduce in size as running times increase, which suggests better model convergence with more computation time. Finally, we see more dramatic drops in RBODL for SGVI compared to SGEM under the same running time, which suggests that the EM algorithms tends to converge faster than VI algorithms.  This result concurs with those of \cite{zhou2020} in the non-stochastic gradient setting.

\begin{table}[ht]
\centering
\begin{tabular}{@{}ccccccccccc@{}}
\toprule
methods &
 RMISE ($\boldsymbol{\alpha}, \boldsymbol{\beta}$) &
  MAE ($\boldsymbol{\mu}$) &
  IS &
  ACR &
  AIW \\ \midrule
MCMC &
  \begin{tabular}[c]{@{}c@{}}0.042\\ (0.008)\end{tabular} &
  \begin{tabular}[c]{@{}c@{}}0.072\\ (0.036)\end{tabular} &
  \begin{tabular}[c]{@{}c@{}}1.042\\ (0.274)\end{tabular} &
  \begin{tabular}[c]{@{}c@{}}0.952\\ (0.046)\end{tabular} &
  \begin{tabular}[c]{@{}c@{}}1.015\\ (0.053)\end{tabular} \\
MCMC-c &
  \begin{tabular}[c]{@{}c@{}}0.042\\ (0.008)\end{tabular} &
  \begin{tabular}[c]{@{}c@{}}0.072\\ (0.036)\end{tabular} &
  \begin{tabular}[c]{@{}c@{}}1.056\\ (0.278)\end{tabular} &
  \begin{tabular}[c]{@{}c@{}}0.952\\ (0.055)\end{tabular} &
  \begin{tabular}[c]{@{}c@{}}1.015\\ (0.058)\end{tabular} \\ \midrule
SGLD &
  \begin{tabular}[c]{@{}c@{}}0.052\\ (0.019)\end{tabular} &
  \begin{tabular}[c]{@{}c@{}}0.109\\ (0.283)\end{tabular} &
  \begin{tabular}[c]{@{}c@{}}3.898\\ (4.739)\end{tabular} &
  \begin{tabular}[c]{@{}c@{}}0.667\\ (0.152)\end{tabular} &
  \begin{tabular}[c]{@{}c@{}}0.844\\ (0.172)\end{tabular} \\
SGVI &
  \begin{tabular}[c]{@{}c@{}}0.046\\ (0.008)\end{tabular} &
  \begin{tabular}[c]{@{}c@{}}0.103\\ (0.048)\end{tabular} &
  \begin{tabular}[c]{@{}c@{}}6.163\\ (2.117)\end{tabular} &
  \begin{tabular}[c]{@{}c@{}}0.333\\ (0.131)\end{tabular} &
  \begin{tabular}[c]{@{}c@{}}0.222\\ (0.012)\end{tabular} \\
SGVI-c &
  \begin{tabular}[c]{@{}c@{}}0.040\\ (0.007)\end{tabular} &
  \begin{tabular}[c]{@{}c@{}}0.093\\ (0.044)\end{tabular} &
  \begin{tabular}[c]{@{}c@{}}4.905\\ (1.698)\end{tabular} &
  \begin{tabular}[c]{@{}c@{}}0.429\\ (0.139)\end{tabular} &
  \begin{tabular}[c]{@{}c@{}}0.213\\ (0.012)\end{tabular} \\
SGEM &
  \begin{tabular}[c]{@{}c@{}}0.100\\ (0.076)\end{tabular} &
  \begin{tabular}[c]{@{}c@{}}0.024\\ (0.026)\end{tabular} &
  - &
  - &
  -\\
SGEM-c &
  \begin{tabular}[c]{@{}c@{}}0.103\\ (0.065)\end{tabular} &
  \begin{tabular}[c]{@{}c@{}}0.023\\ (0.022)\end{tabular} &
  - &
  - &
  - \\ \midrule
{\color{black}MLE-I} &
  \begin{tabular}[c]{@{}c@{}}0.030\\ (0.007)\end{tabular} &
  \begin{tabular}[c]{@{}c@{}}0.077\\ (0.036)\end{tabular} &
  - &
  - &
  - \\
{\color{black}EM-BK}&
  \begin{tabular}[c]{@{}c@{}}0.340\\ (0.009)\end{tabular} &
  \begin{tabular}[c]{@{}c@{}}2.065\\ (0.256)\end{tabular} &
  - &
  - &
  - \\ \bottomrule
\end{tabular}\captionof{table}{Estimation metrics across all nine methods for the first data generation mechanism ($K=3$). The values in the grid cells are the average across 50 datasets, with the standard deviation in the brackets. }
\label{tab:est_compare_stoc}
\end{table}

\paragraph{Estimation accuracy.} Table \ref{tab:est_compare_stoc} shows the estimation performance measures, including RMISE, MAE, IS, ACR and AIW for all seven methods, {\color{black} along with the two frequentist benchmark methods}. Similar to the previous simulation study, we run the same algorithm on 50 datasets with 16 different initial parameter values and choose the instance associated with the highest observed data likelihood for estimation performance evaluation. We keep the same stochastic optimization hyperparameters, fixing the subsampling ratio $\kappa$ at 0.05. For SGLD, SGVI, SGVI-c, SGEM and SGEM-c, we run the algorithms for 30 minutes. For MCMC and MCMC-c, we run the algorithms on the whole dataset without subsampling for 15,000 iterations, which took around 12 hours to complete. We discard the first 5,000 samples as burn-in and calculate the posterior median for estimation performance evaluation. {\color{black} For both optimization-based frequentist methods, we run the algorithms until they have converged under the default convergence criteria.} As would be expected, the MCMC algorithms have values of RMISE, MAE and IS lower than the majority of other methods. Moreover, MCMC algorithms produce coverage rates that are very close to nominal. Among the remaining methods from Section 3, SGLD shows the best uncertainty estimation performance with the lowest IS and ACR closest to the nominal rate, while SGVI-c shows the best point estimation performance with RMISE even lower than the MCMC methods. Additionally, we observe a significant improvement in both RMISE and IS for SGVI-c compared to SGVI, indicating that incorporating a boundary correction can lead to both improved point and uncertainty estimation performance for SGVI. {\color{black}For the frequentist benchmark methods, MLE-I has the lowest RMISE among all nine methods in this simulation scenario, while having MAE higher than most methods. EM-BK has the worst point estimation accuracy across the board.} 

\paragraph{Sensitivity analysis for different dataset sizes.} We also look at how sensitive the RBODLs are to the scale of the data we have. We run the same algorithms on two sets of 50 datasets with $T=500$ and $T=2000$, and the RBODLs are shown in Tables \ref{tab:RBODL_slong} and \ref{tab:RBODL_sshort}. For the small datasets, the median RBODLs change much less than the large datasets over different subsampling ratios. This is not surprising, as it indicates that the algorithms tend to reach convergence faster for smaller datasets than the larger ones. Additionally, the optimal subsampling ratios for smaller datasets tend to be larger, indicating that there could be a fixed amount of data needed for algorithms to attain better model-fitting results.

\paragraph{Sensitivity analysis for the stochastic search parameters.} Next, we investigate the effect of the stochastic search parameters on our previous simulation results. To this end, we rerun our analysis for the medium-sized dataset under each of the following three sets of $\tau_1, \tau_2$ values: (1) $\tau_1 = 5, \tau_2 = 0.51$, (2) $\tau_1 = 1, \tau_2 = 1$, (3) $\tau_1 = 5, \tau_2 = 1$. The results are shown in Table \ref{tab:RBODL_s2}, \ref{tab:RBODL_s3} and \ref{tab:RBODL_s4}. As expected, the behavior of RBODL with respect to subsampling ratio and running time is similar to the default scenario shown in Table \ref{tab:RBODL}. Also, the results in Table \ref{tab:RBODL_s2} are more similar to those in Table \ref{tab:RBODL} than cases in Tables \ref{tab:RBODL_s3} and \ref{tab:RBODL_s4}. This is because $\tau_2$ controls the decay rate of the stepsize parameter, which has a bigger long-term effect on $\rho_r$ compared to the delay parameter $\tau_1$. We also looked at the estimation performances for all five methods under these four scenarios, with performance metrics shown in Table \ref{tab:est_compare_stoc1} and \ref{tab:est_compare2}. We can see that all algorithms performed significantly better in scenarios where $\tau_2 = 0.51$, indicating that a large value of $\tau_2$ may lead to suboptimal estimates because algorithms converged too fast. We also note that the SGEM-c outperformed SGEM where $\tau_2 = 0.51$. 

\paragraph{Sensitivity analysis for the threshold values of the boundary-corrected methods.} Previously, we chose a fixed value $\delta = 0.25$ as the common threshold for all boundary-corrected methods. In this simulation study, we investigate a systematic way of choosing $\delta$ based on the discussion in Section \ref{sec:approx}, and study the parameter estimation performance under different values of $\delta$. Given an estimate for $\boldsymbol{\beta}$ and a fixed value $r > 0$, we find $\delta$ such that $\delta = \frac{1}{K^2} \sum_{j=1}^K \sum_{k=1}^K \frac{1}{\hat{\beta} _{k,\ell}}$. 
Table \ref{tab:sens_thres} shows the point and uncertainty estimation results for SGVI-c and SGEM-c under values of $r \in \{0.5, 1, 2, 3, 4\}$. For both methods, all estimation metrics reached optimality between $r=1$ and $r=2$.

{
\color{black}
\begin{table}[ht]
\centering
\begin{tabular}{@{}>{\color{black}}c>{\color{black}}c>{\color{black}}c>{\color{black}}c>{\color{black}}c>{\color{black}}cccccc@{}}
\toprule
methods &
 RMISE ($\boldsymbol{\alpha}, \boldsymbol{\beta}$) &
  MAE ($\boldsymbol{\mu}$) &
  IS &
  ACR &
  AIW \\ \midrule
  MCMC-rw &
  \begin{tabular}[c]{@{}c@{}}0.044\\ (0.008)\end{tabular} &
  \begin{tabular}[c]{@{}c@{}}0.075\\ (0.037)\end{tabular} &
  \begin{tabular}[c]{@{}c@{}}1.242\\ (0.418)\end{tabular} &
  \begin{tabular}[c]{@{}c@{}}0.951\\ (0.057)\end{tabular} &
  \begin{tabular}[c]{@{}c@{}}1.037\\ (0.063)\end{tabular} \\
  SGLD-apx &
  \begin{tabular}[c]{@{}c@{}}0.050\\ (0.012)\end{tabular} &
  \begin{tabular}[c]{@{}c@{}}0.113\\ (0.047)\end{tabular} &
  \begin{tabular}[c]{@{}c@{}}2.494\\ (1.483)\end{tabular} &
  \begin{tabular}[c]{@{}c@{}}0.763\\ (0.111)\end{tabular} &
  \begin{tabular}[c]{@{}c@{}}0.876\\ (0.120)\end{tabular} \\
 \bottomrule
\end{tabular}\captionof{table}{Estimation metrics for MCMC-rw and SGLD-apx for the first data generation mechanism ($K=3$). The values in the grid cells are the average across 50 datasets, with the standard deviation in the brackets. }
\label{tab:est_compare_stoc2}
\end{table}
}

{
\color{black}
\paragraph{Sensitivity analysis for boundary approximation.} To numerically show that the boundary approximations in \eqref{eq:ass} and \eqref{eq:corr} does not lead to a significant information loss, we implemented two additional methods as a comparison: the full MCMC algorithm with random-walk updates for $\boldsymbol{\beta}$ using the full likelihood (MCMC-rw), and the Langevin dynamics algorithm with likelihood approximation (SGLD-apx).  We focus on these two algorithms here because they are ones where both the true and the various approximate likelihoods can be implemented in a straightforward fashion. Table \ref{tab:est_compare_stoc2} shows the results for these two methods, which should be compared with those for MCMC, MCMC-c and SGLD in Table \ref{tab:est_compare_stoc1}. The results suggest that the errors introduced by the approximation are negligible (at least, compared to the Monte Carlo error involved), both in terms of both point and uncertainty estimation.
}

{
\color{black}
\paragraph{Robustness to model sparsity.} Our last evaluation uses the alternative data generation mechanism described in Section \ref{sec:syn} where we have a ten-dimensional MHP with varying degrees of sparsity in $\boldsymbol{\alpha}$.  
Table \ref{tab:est-spars} shows the average and standard deviation of the estimation metrics for all five methods and the two frequentist benchmarks. Due to presence of sparsity, the metrics for the $\alpha_{k,\ell}$s and $\beta_{k,\ell}$s are evaluated only for entries that are associated with a non-zero true value for $\alpha_{k,\ell}$. 
{\color{black} Note that, unlike the first simulation scenario, MLE-I and EM-BK perform significantly worse than the Bayesian methods in all of our simulations with $K=10$. Furthermore, MLE-I takes much longer time to converge compared to the three-dimensional scenario. For example, the average running time for the `low sparsity' scenario is 61.47 minutes, which is much longer than the 30-minute running time for all five stochastic Bayesian methods.}  Aside from this, the results are largely consistent those from the original, three-dimensional data generation mechanism.  SGVI and SGVI-c have the best point estimation performance, SGEM and SGEM-c have the worst, and SGLD is in between. Furthermore, scenarios that are less sparse tend to be associated with lower estimation errors.} 
 
{
\color{black}
\begin{table}[]
\centering
\label{tab:est-spars}
\begin{tabular}{>{\color{black}}c>{\color{black}}c>{\color{black}}c>{\color{black}}c>{\color{black}}c>{\color{black}}c>{\color{black}}c}
\hline
methods &
  sparsity &
  RMISE ($\boldsymbol{\alpha}, \boldsymbol{\beta}$) &
  MAE ($\boldsymbol{\mu}$) &
  IS &
  ACR &
  AIW \\ \hline
\multirow{3}{*}{SGLD} &
  high &
  \begin{tabular}[c]{@{}c@{}}0.048\\ (0.005)\end{tabular} &
  \begin{tabular}[c]{@{}c@{}}0.295\\ (0.035)\end{tabular} &
  \begin{tabular}[c]{@{}c@{}}1.562\\ (0.546)\end{tabular} &
  \begin{tabular}[c]{@{}c@{}}0.742\\ (0.033)\end{tabular} &
  \begin{tabular}[c]{@{}c@{}}1.003\\ (0.085)\end{tabular} \\
 &
  medium &
  \begin{tabular}[c]{@{}c@{}}0.041\\ (0.005)\end{tabular} &
  \begin{tabular}[c]{@{}c@{}}0.256\\ (0.037)\end{tabular} &
  \begin{tabular}[c]{@{}c@{}}2.172\\ (0.680)\end{tabular} &
  \begin{tabular}[c]{@{}c@{}}0.802\\ (0.036)\end{tabular} &
  \begin{tabular}[c]{@{}c@{}}1.327\\ (0.144)\end{tabular} \\
 &
  low &
  \begin{tabular}[c]{@{}c@{}}0.031\\ (0.003)\end{tabular} &
  \begin{tabular}[c]{@{}c@{}}0.171\\ (0.018)\end{tabular} &
  \begin{tabular}[c]{@{}c@{}}2.915\\ (1.033)\end{tabular} &
  \begin{tabular}[c]{@{}c@{}}0.841\\ (0.034)\end{tabular} &
  \begin{tabular}[c]{@{}c@{}}1.443\\ (0.072)\end{tabular} \\
\multirow{3}{*}{SGVI} &
  high &
  \begin{tabular}[c]{@{}c@{}}0.047\\ (0.013)\end{tabular} &
  \begin{tabular}[c]{@{}c@{}}0.179\\ (0.038)\end{tabular} &
  \begin{tabular}[c]{@{}c@{}}2.011\\ (0.735)\end{tabular} &
  \begin{tabular}[c]{@{}c@{}}0.699\\ (0.059)\end{tabular} &
  \begin{tabular}[c]{@{}c@{}}0.680\\ (0.112)\end{tabular} \\
 &
  medium &
  \begin{tabular}[c]{@{}c@{}}0.040\\ (0.008)\end{tabular} &
  \begin{tabular}[c]{@{}c@{}}0.158\\ (0.039)\end{tabular} &
  \begin{tabular}[c]{@{}c@{}}3.158\\ (1.131)\end{tabular} &
  \begin{tabular}[c]{@{}c@{}}0.704\\ (0.053)\end{tabular} &
  \begin{tabular}[c]{@{}c@{}}0.766\\ (0.105)\end{tabular} \\
 &
  low &
  \begin{tabular}[c]{@{}c@{}}0.030\\ (0.004)\end{tabular} &
  \begin{tabular}[c]{@{}c@{}}0.111\\ (0.029)\end{tabular} &
  \begin{tabular}[c]{@{}c@{}}5.264\\ (1.269)\end{tabular} &
  \begin{tabular}[c]{@{}c@{}}0.620\\ (0.045)\end{tabular} &
  \begin{tabular}[c]{@{}c@{}}0.664\\ (0.089)\end{tabular} \\
\multirow{3}{*}{SGVI-c} &
  high &
  \begin{tabular}[c]{@{}c@{}}0.047\\ (0.012)\end{tabular} &
  \begin{tabular}[c]{@{}c@{}}0.183\\ (0.038)\end{tabular} &
  \begin{tabular}[c]{@{}c@{}}1.904\\ (0.729)\end{tabular} &
  \begin{tabular}[c]{@{}c@{}}0.703\\ (0.062)\end{tabular} &
  \begin{tabular}[c]{@{}c@{}}0.665\\ (0.047)\end{tabular} \\
 &
  medium &
  \begin{tabular}[c]{@{}c@{}}0.039\\ (0.008)\end{tabular} &
  \begin{tabular}[c]{@{}c@{}}0.163\\ (0.039)\end{tabular} &
  \begin{tabular}[c]{@{}c@{}}2.941\\ (1.051)\end{tabular} &
  \begin{tabular}[c]{@{}c@{}}0.710\\ (0.055)\end{tabular} &
  \begin{tabular}[c]{@{}c@{}}0.747\\ (0.051)\end{tabular} \\
 &
  low &
  \begin{tabular}[c]{@{}c@{}}0.029\\ (0.004)\end{tabular} &
  \begin{tabular}[c]{@{}c@{}}0.118\\ (0.029)\end{tabular} &
  \begin{tabular}[c]{@{}c@{}}4.515\\ (1.110)\end{tabular} &
  \begin{tabular}[c]{@{}c@{}}0.637\\ (0.048)\end{tabular} &
  \begin{tabular}[c]{@{}c@{}}0.638\\ (0.033)\end{tabular} \\
\multirow{3}{*}{SGEM} &
  high &
  \begin{tabular}[c]{@{}c@{}}0.047\\ (0.007)\end{tabular} &
  \begin{tabular}[c]{@{}c@{}}0.308\\ (0.033)\end{tabular} &
  - &
  - &
  - \\
 &
  medium &
  \begin{tabular}[c]{@{}c@{}}0.039\\ (0.005)\end{tabular} &
  \begin{tabular}[c]{@{}c@{}}0.263\\ (0.030)\end{tabular} &
  - &
  - &
  - \\
 &
  low &
  \begin{tabular}[c]{@{}c@{}}0.030\\ (0.003)\end{tabular} &
  \begin{tabular}[c]{@{}c@{}}0.160\\ (0.026)\end{tabular} &
  - &
  - &
  - \\
\multirow{3}{*}{SGEM-c} &
  high &
  \begin{tabular}[c]{@{}c@{}}0.047\\ (0.006)\end{tabular} &
  \begin{tabular}[c]{@{}c@{}}0.263\\ (0.034)\end{tabular} &
  - &
  - &
  - \\
 &
  medium &
  \begin{tabular}[c]{@{}c@{}}0.039\\ (0.005)\end{tabular} &
  \begin{tabular}[c]{@{}c@{}}0.161\\ (0.030)\end{tabular} &
  - &
  - &
  - \\
 &
  low &
  \begin{tabular}[c]{@{}c@{}}0.030\\ (0.003)\end{tabular} &
  \begin{tabular}[c]{@{}c@{}}0.161\\ (0.027)\end{tabular} &
  - &
  - &
  - \\ \midrule
  \multirow{3}{*}{MLE-I} &
  high &
  \begin{tabular}[c]{@{}c@{}}0.206\\ (0.015)\end{tabular} &
  \begin{tabular}[c]{@{}c@{}}0.331\\ (0.151)\end{tabular} &
  - &
  - &
  - \\
 &
  medium &
  \begin{tabular}[c]{@{}c@{}}0.155\\ (0.008)\end{tabular} &
  \begin{tabular}[c]{@{}c@{}}0.421\\ (0.199)\end{tabular} &
  - &
  - &
  - \\
 &
  low &
  \begin{tabular}[c]{@{}c@{}}0.124\\ (0.003)\end{tabular} &
  \begin{tabular}[c]{@{}c@{}}0.247\\ (0.195)\end{tabular} &
  - &
  - &
  - \\
    \multirow{3}{*}{EM-BK} &
 
 high &
  \begin{tabular}[c]{@{}c@{}}0.462\\ (0.047)\end{tabular} &
  \begin{tabular}[c]{@{}c@{}}0.584\\ (0.106)\end{tabular} &
  - &
  - &
  - \\
 &
  medium &
  \begin{tabular}[c]{@{}c@{}}0.366\\ (0.048)\end{tabular} &
  \begin{tabular}[c]{@{}c@{}}0.445\\ (0.106)\end{tabular} &
  - &
  - &
  - \\
 &
  low &
  \begin{tabular}[c]{@{}c@{}}0.261\\ (0.019)\end{tabular} &
  \begin{tabular}[c]{@{}c@{}}0.197\\ (0.088)\end{tabular} &
  - &
  - &
  - \\
  \hline
\end{tabular}
\caption{Estimation metrics across SGLD, SGVI, SGVI-c, SGEM , SGEM-c, MLE-I and EM-BK for the second data generation mechanism ($K=10$) with three levels of model sparsity. The values in the grid cells are the average across 50 datasets, with the standard deviation in the brackets.}
\label{tab:est-spars}
\end{table}
}

\section{Real-world application}

\paragraph{Data description.} In this section, we apply our methods to model the market risk dynamics in the Standard \& Poor (S\&P)'s 500 intraday index prices for its 11 sectors: Consumer Discretionary (COND), Communication Staples (CONS), Energy (ENRS), Financials (FINL), Health Care (HLTH), Industrials (INDU), Information Technology (INFT), Materials (MATR), Real Estate (RLST), Communication Services (TELS), Utilities (UTIL). To achieve this, price data between August 22, 2022 and Jan 23, 2023 was downloaded from Bloomberg Finance L.P. Similar to \cite{rodriguez2017assessing}, an event occurs on dimension $k=1,\ldots, 11$ if the negative log returns in sector $k$ exceeds a predetermined threshold (in our case, a 0.05\% drop on a one-minute basis). The resulting dataset contains 55,509 events across the 11 dimensions. 

\paragraph{Results.} We fit a Hawkes process model with exponential decay functions to the event data using the SGEM, SGEM-c, SGVI, SGVI-c and SGLD algorithms. We set the subsampling ratio of $\kappa = 0.01$ for SGLD and of $\kappa = 0.05$ for all other methods. Similar to the procedure in Section \ref{sec:syn}, we start all algorithms at 16 different initial points and choose the instances with the highest observed data likelihood to compute the estimates.  Furthermore, all these algorithms were run for a fixed period of 30 minutes for each initial set of values. As a reference, we also apply MCMC and MCMC-c to the dataset and use 10,000 posterior samples after 10,000 burn-ins, which roughly took around two days. 

Figure \ref{fig:mat_alpha} shows heatmaps of point estimates for the $\boldsymbol{\alpha}$ parameters for all seven algorithms. To facilitate comparisons, we also generate a visual representation by constructing a measure of similarity between sectors $i$ and $j$ as $\Upsilon(i,j) = \exp\left\{-\frac{1}{2}(\alpha_{ij} + \alpha_{ji})\right\}$, and then use multidimensional scaling \citep{torgerson1952multidimensional} to find a two-dimensional representation of these similarities. Because the representation is arbitrary up to translations, rotations and reflections, we use Procrustes analysis \citep{dryden2016statistical} to align the representations for all the algorithms. 
All seven methods yield similar point estimates for $\boldsymbol{\alpha}$. To explore this question in more detail, we present in Table \ref{tab:app_dist_alpha} the mean square distance between point estimates for each pair of methods.  We see that MCMC and MCMC-c are almost identical, and that SGEM, SGVI, and SGEM-c yield very similar estimates.  Interestingly, SGVI-c yields results that are as close to those of the MCMC ``gold standard'' as those from SGEM, SGVI, and SGEM-c, but that are fairly different from them.  We also note that SGLD seems to yield results that are the furthest away from the MCMC procedures, suggesting that a time budget of 30 minutes is not enough to achieve reliable results in this example.  From a substantive point of view, Figure \ref{fig:mat_alpha} suggests mutual excitation of exceedances within each of the following three groups: (1) UTIL, MATR COND, (2) INFT, FINL and INDU, (3) TELS, RLST, HLTH and CONS and ENRS.  One particular interesting result is the estimates for the energy (ENRS) sector, which has a much higher diagonal $\alpha$ estimate and lower off diagonal estimates corresponding to other sectors. This is supported by the scatterplot of principal coordinates, in which the point for ENRS is away from all other sectors, indicating that such sector may be less likely associated with price movements in other sectors.

Next, we show in Figure \ref{fig:mat_beta} point estimates of $\boldsymbol{\beta}$ under all 7 methods and, in Table \ref{tab:app_dist_beta}, the mean square distance between the estimates generated by the different methods. The pattern of the results is very similar: (1) MCMC and MCMC-c yield the most similar results, (2) SGLD seems to yield estimates that are furthest away from those generated by the MCMC methods, (3) SGEM, SGVI, and SGEM-c yield very similar results to each other, and (4) SGVI-c yields different results from SGEM, SGVI, and SGEM-c, but they are as close to those of the MCMC approaches as those from the three alternatives.  We note, however, that the estimates of $\boldsymbol{\beta}$ generated by MCMC and MCMC-c do seem to differ from each other much more than than the estimates of $\boldsymbol{\alpha}$ did.

Figure \ref{fig:mat_mu} shows the point estimates for $\boldsymbol{\mu}$, and Table \ref{tab:app_dist_mu} shows mean square distances between the model estimates.  Not surprisingly, the same patterns arise again, although we note that the distances tend to be smaller.  From an application point of view, we note that all methods identify ENRS as a sector with a very high baseline rate events, FINL, INFT, INDU, HLTH and CONS as sectors where the majority of price drops are the result of contagion from turbulence in other sectors.

As for uncertainty estimation, Figures \ref{fig:app_unc_alpha}, \ref{fig:app_unc_beta} and \ref{fig:app_unc_mu} show the length of the estimated posterior credible intervals for $\boldsymbol{\alpha}$, $\boldsymbol{\beta}$ and $\boldsymbol{\mu}$ for SGVI, SGVI-c and SGLD, as well as for MCMC and MCMC-c.  As was the case with simulated datasets, stochastic gradient methods seem to underestimate the uncertainty in the posterior distribution, with SGVI and SGVI-c doing much more dramatically than SGLD. 

{
\color{black} 
Finally, we conduct in-sample goodness-of-fit analysis using quantile-quantile plots for the posterior distribution of inter-event times. The construction of these plots relies on the well-known time-rescaling theorem \citep{daley2008}, which states that, if the Hawkes model process is correct, the transformed inter-arrival times on dimension $\ell$, $z_{i}^\ell = 1 - \exp \left\{-\left[\Lambda_\ell \left(t_{i}^{\ell}\right)-\Lambda_\ell \left(t_{i-1}^{\ell}\right)\right]\right\}$, where $\Lambda_\ell(t) = \int_0^t \lambda_\ell (s)ds$ is the compensator for $\lambda_{\ell}(t)$), follow a uniform distribution on the unit interval. Quantile-quantile plots of the (transformed) observed inter-arrival vs.\ those for the uniform distribution for all seven methods are shown in the Supplementary Materials. All seven methods produce very similar results. In all cases, the plots suggests that a MHP model tends to predict somewhat shorter inter-arrival times than expected.}

\section{Discussion}

Our experiments indicate that computational methods for Hawkes process models based on stochastic gradients can lead to accurate estimators and substantially reduce the computational burden in large datasets. However, our results also indicate some clear tradeoffs. SGEM algorithms are the fastest (which is consistent with the results of \cite{zhou2020} for full-batch methods), but they do not yield interval estimates of the parameters. SGVI algorithms are almost as computationally efficient as SGEM and yield interval estimates. However, these interval estimates are too narrow, leading to substantial undercoverage. That variational inference underestimates the variance of the posterior distribution is well known (e.g., see \citealp{blei2017variational}), but it was still striking to see how low the coverage can be in the case of MHPs.  SGLD algorithms are the slowest and require careful tuning, but can also lead to more accurate interval estimates if allowed to run for enough time.

Our experiments also suggest that the new approximation to the full-data likelihood based on a first-order Taylor expansion of the compensator of the Hawkes process has the potential to improve the accuracy of the algorithm with minimal additional computational costs.  This was clearer for SGVI algorithms, where the approximation clearly improved the MSE of the point estimators.  Finally, our experiments suggest that, as sample sizes grow, the fraction of time involved in the subsamples used to compute stochastic gradients can decrease as long as the number of observations in each subsample remains above a critical threshold. This is important, because it suggests that, at least empirically, the computational complexity of this algorithms can remain roughly constant as a function of the sample size.

{\color{black} As pointed out by one of the referees, an often important question in the context of MHPs is to identify which interactions are non-null.  Bayesian inference for the the interaction graph can be accommodated through a slight modification of the priors in Section \ref{sec:hyperpar} in which the single Gamma prior for $\alpha_{k,\ell}$ is replaced by a ``spike-and-slab’’ style specification (e.g., see \citealp{mitchell1988bayesian}) which, in this case, can be defined through a mixture of two Gamma priors with very different means.  Extending our algorithms to this setting is straightforward, perhaps requiring the introduction of additional latent component indicators for each of the $\alpha_{k,\ell}$s.  Our numerical explorations of this variant of the model suggest that the SGEM and SGVI algorithms perform quite well in this setting, but the SGLD algorithm struggles to properly explore the parameter space.
Another important question is how these methods extend to non-linear Hawkes processes (e.g., see \citealp{bremaud1996stability}), which allows for the $\alpha_{k,\ell}$s to take negative values. With the addition of a nonlinear activation layer in the intensity function, nonlinear MHP models break model conjugacy and therefore cause computational challenges for SGEM and SGVI. For certain non-linear MHP models, latent variable augmentation can be used to restore model conjugacy, enabling a more or less direct extension of the algorithms described here (e.g., see \citealp{malem2021nonlinear, zhou2021nonlinear}).  Alternatives that could be used in more general settings include methods based on Laplace approximations (e.g., see \citep{wang2013variational}) and non-conjuage message passing (e.g., see \citealp{khan2017conjugate}), which  have been successfully applied  in other model settings. However, as far we are aware, these have not yet been explored for general non-linear Hawkes process.  On the other hand, since SGLD does not rely on conjugacy, its extension to nonlinear MHP is relatively more straightforward than SGEM and SGVI.   
}

The work in this manuscript focused on a very particular class of MHP with constant baseline intensity and a parametric excitation function. This was a deliberate choice meant to simplify exposition and interpretation.  However, the insights from this manuscript apply much more broadly.  For example, we are currently work on fast inference algorithms for MHP models where the excitation functions are modeled non parametrically using mixtures of dependent Dirichlet processes.  This, and other extensions, will be discussed elsewhere.

\section*{Reproducibility information}

The R scripts for the algorithms were run on CentOS 7 Linux, with 128GB of memory. The dataset and the R code for the simulation and application examples can be found at \url{https://github.com/AlexJiang1125/MHP}.

\section*{Acknowledgements}

This research was partially supported by NSF grants NSF-2023495 and NSF-2114727. We would also like to acknowledge the support of the Research Computing Club at the University of Washington by providing access to their computational resources.  We also express our gratitude to two referees, whose feedback during the review process helped improve the manuscript. 

\appendix  
{
\color{black}
\section{Loss of information from the boundary-corrected likelihood approximation}
\label{ap:theorem}

Given the time domain $(0,T]$ and approximation threshold $\delta$, We define $\varphi_{k,\ell}(\delta, T)$ as the \textbf{information loss ratio} for approximating $\Lambda_{k, \ell} (\cdot)$, which has the following formula: 
\begin{align}
    \varphi_{k, \ell}(\delta, T) = \frac{1}{T} \int_0^T \left[1 - \frac{1 - \exp \left(- \beta_{k, \ell}(T-t) \right)}{\mathbf{1}(0 < t \leq T-\delta) + \left[\beta_{k,\ell}(T-t) \right]\cdot \mathbf{1}(T-\delta < t \leq T)} \right] dt.
\end{align}

We can interpret $\varphi_{k,\ell}(\delta, T)$ as follows: suppose $t$ is a time event on dimension $k$, $\varphi_{k,\ell}(\delta, T)$ is one minus the ratio of $t$'s contribution to the compensator (sans the common term $\mu_k T$), integrated over a uniform distribution for $t$ on $[0,T]$.  The use of a uniform distribution here seems appropriate under the assumption that the MHP is stationary.

\newtheorem{theorem}{Theorem}

\begin{theorem}
Assuming $\max_{k, \ell} |\beta_{k,\ell}| \leq M$ and $\delta$ is fixed, the information loss ratio converges to zero as $T \rightarrow \infty$, i.e.
\begin{align*}
    \lim_{T \rightarrow \infty} \max_{1 \leq k, \ell \leq K} |\varphi_{k, \ell}(\delta, T)| = 0.
\end{align*}
\end{theorem}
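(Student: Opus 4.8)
The plan is to exploit the piecewise structure of the integrand, which is controlled by the two indicator functions, and to split the integral defining $\varphi_{k,\ell}(\delta,T)$ at $t = T-\delta$ (we may assume $T > \delta$, which holds for all large $T$). On the interior region $0 < t \le T-\delta$ the denominator equals $1$, so the integrand collapses to $1 - \left(1 - e^{-\beta_{k,\ell}(T-t)}\right) = e^{-\beta_{k,\ell}(T-t)}$; on the boundary region $T-\delta < t \le T$ the denominator equals $\beta_{k,\ell}(T-t)$, so the integrand is $1 - \frac{1 - e^{-\beta_{k,\ell}(T-t)}}{\beta_{k,\ell}(T-t)}$. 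A first observation I would record is that both expressions are nonnegative (for the second, because $0 < \frac{1-e^{-x}}{x} < 1$ for $x>0$), so $\varphi_{k,\ell}(\delta,T) \ge 0$ and the absolute value can be dropped.

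Next I would evaluate the two pieces separately. The interior integral is elementary, $\int_0^{T-\delta} e^{-\beta_{k,\ell}(T-t)}\,dt = \frac{1}{\beta_{k,\ell}}\left(e^{-\beta_{k,\ell}\delta} - e^{-\beta_{k,\ell}T}\right) \le \frac{1}{\beta_{k,\ell}}$. The key structural fact about the boundary integral is that, after the substitution $u = T-t$, it becomes $\int_0^{\delta}\left(1 - \frac{1 - e^{-\beta_{k,\ell}u}}{\beta_{k,\ell}u}\right)du$, which is \emph{independent of $T$} and bounded above by $\delta$ (the integrand lies in $[0,1)$ and is integrable at $u=0$, since $\frac{1-e^{-\beta_{k,\ell}u}}{\beta_{k,\ell}u}\to 1$). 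A sharper bound is available from the Taylor expansion $1 - \frac{1-e^{-x}}{x} \le \frac{x}{2}$, which together with $\beta_{k,\ell} \le M$ gives a boundary contribution of order $M\delta^2$; but the crude bound by $\delta$ already suffices. Combining the two pieces yields
\[
0 \le \varphi_{k,\ell}(\delta,T) \le \frac{1}{T}\left(\frac{1}{\beta_{k,\ell}} + \delta\right).
\]

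Finally, I would take the maximum over the finitely many index pairs. Writing $\beta_{\min} = \min_{k,\ell}\beta_{k,\ell} > 0$, which is positive because the $\beta_{k,\ell}$ are finitely many fixed positive model parameters, I obtain $\max_{k,\ell}\varphi_{k,\ell}(\delta,T) \le \frac{1}{T}\left(\beta_{\min}^{-1} + \delta\right) \to 0$ as $T \to \infty$, which is the claim.

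The step I expect to be the main conceptual obstacle is recognizing that the boundary integral does not grow with $T$: all of the information loss is concentrated in an $O(1)$ contribution from a fixed-width window near $T$, together with an $O(1)$ tail from the interior, so that dividing by $T$ annihilates both. The only assumption genuinely doing work is the positivity and finiteness of the $\beta_{k,\ell}$'s, which guarantees $\beta_{\min} > 0$; the upper bound $M$ merely keeps the boundary contribution uniformly controlled and is otherwise inessential. I would also flag that the convergence is genuinely non-uniform as $\beta_{k,\ell}\downarrow 0$ (the interior term $\beta_{k,\ell}^{-1}/T$ degrades), so the argument must rely on the finiteness of the index set and cannot be replaced by a bound uniform over all admissible $\beta$.
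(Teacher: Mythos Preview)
Your proposal is correct and follows essentially the same approach as the paper: both split the integral at $t=T-\delta$, evaluate the interior piece exactly to get an $O(1/\beta_{k,\ell})$ contribution, and bound the boundary piece by a quantity independent of $T$. The paper opts for the sharper Taylor-based bound $\beta_{k,\ell}\delta^2/(4T)$ on the boundary term (which you also mention as an alternative), whereas you use the cruder bound $\delta/T$; both suffice, and your explicit discussion of nonnegativity and of the role of $\beta_{\min}$ versus the upper bound $M$ is in fact more careful than the paper's treatment.
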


\begin{proof}
We proved the theorem by showing that $\lim_{T \rightarrow \infty} |\varphi_{k, \ell}(\delta, T)| = 0$ for all $k, \ell = 1, \dots, K$. To derive an upper bound on $|\varphi_{k, \ell}(\delta, T)|$, we start by breaking down the integration domain into two parts: $(0, T-\delta]$ and $(T-\delta, T]$. For the first part, we have:

\begin{align*}
    \left|\frac{1}{T} \int_0^{T-\delta} \left[1 - (1 - \exp (-\beta_{k, \ell} (T-t))) \right] dt\right| = \left|\frac{1}{T} \int_{0}^{T-\delta} \exp (-\beta_{k, \ell}t) dt \right| &=  \frac{1}{\beta_{k, \ell} T}\left[1 - \exp (-\beta_{k,\ell}(T-\delta)) \right].
\end{align*}
For the second part, we have: 
\begin{align*}
    \left|\frac{1}{T} \int_{T-\delta}^T \left[ 1 - \frac{1 - \exp \left(- \beta_{k, \ell}(T-t) \right)}{\left[\beta_{k,\ell}(T-t) \right]} \right] dt \right| &= \frac{1}{\beta_{k, \ell} T} \int_0^{\beta_{k, \ell} \delta} \frac{s-1-\exp(-s)}{s}ds \\
    &= \frac{1}{\beta_{k, \ell} T} \int_0^{\beta_{k, \ell} \delta} \frac{\exp(- 2 \Delta s) s^2}{2s}ds, ~~ (0 \leq \Delta  s \leq s) \\
    &\leq  \frac{1}{\beta_{k, \ell} T}\int_0^{\beta_{k, \ell} \delta} \frac{s}{2}ds = \frac{\beta_{k, \ell} \delta^2}{4T}.
\end{align*}

Assuming $\delta$ is fixed and $\beta_{k, \ell}$ is bounded by a fixed constant over $k, \ell = 1, \dots , K$, both parts converge to zero as $T  \rightarrow \infty$. As $|\varphi_{k, \ell}(\delta, T)|$ is upper bounded by the two parts above, we showed that $\lim_{T \rightarrow \infty} |\varphi_{k, \ell}(\delta, T)| = 0$. 
\end{proof}
}

\bibliographystyle{apalike}
\bibliography{articles}

\newpage

%

\begin{table*}[t]
\centering
\begin{tabular}{@{}ccccccc@{}}
\toprule
methods               & running time & 0.05         & 0.1          & 0.2          & 0.3          & 0.4          \\ \midrule
\multirow{3}{*}{SGEM} & 1 min        & \textbf{0.999} (0.002)&0.998 (0.002)&0.994 (0.002)&0.993 (0.002)&0.991 (0.003) \\
                      & 3 min        & \textbf{1.001} (0.003)&0.999 (0.003)&0.997 (0.003)&0.996 (0.003)&0.994 (0.003) \\
                      & 5 min        & \textbf{1.001} (0.009)&1.000 (0.008)&0.999 (0.008)&0.997 (0.009)&0.996 (0.009) \\ \cmidrule(l){3-7} 
\multirow{3}{*}{SGEM-c} & 1 min     & \textbf{1.003} (0.010)&0.998 (0.017)&0.996 (0.022)&1.000 (0.025)&1.000 (0.027) \\
                      & 3 min        & \textbf{1.006} (0.006)&1.003 (0.01)&0.994 (0.018)&0.994 (0.021)&0.998 (0.023) \\
                      & 5 min        & \textbf{1.007} (0.007)&1.006 (0.009)&1.003 (0.015)&1.002 (0.02)&1.003 (0.023) \\ \cmidrule(l){3-7} 
\multirow{3}{*}{SGVI} & 1 min        & \textbf{0.999} ($<$ 0.001)&0.996 (0.001)&0.991 (0.001)&0.988 (0.001)&0.986 (0.002) \\
                      & 3 min        & \textbf{1.001} ($<$ 0.001)&0.999 ($<$ 0.001)&0.995 (0.001)&0.992 (0.001)&0.99 (0.001)\\
                      & 5 min        & \textbf{1.001} ($<$ 0.001)&1.000 ($<$ 0.001)&0.997 ($<$ 0.001)&0.994 (0.001)&0.992 (0.001) \\ \cmidrule(l){3-7} 
\multirow{3}{*}{SGVI-c}& 1 min        & \textbf{0.998} ($<$ 0.001)&0.995 (0.001)&0.991 (0.001)&0.988 (0.001)&0.985 (0.002) \\
                      & 3 min        & \textbf{1.000} ($<$ 0.001)&0.998 ($<$ 0.001)&0.994 (0.001)&0.992 (0.001)&0.99 (0.001) \\
                      & 5 min        & \textbf{1.000} ($<$ 0.001)&0.999 ($<$ 0.001)&0.996 (<0.001)&0.993 (0.001)&0.991 (0.001) \\ \cmidrule(l){3-7} 
\multirow{3}{*}{SGLD} & 1 min        & \textbf{0.997} (0.001)&0.991 (0.002)&0.98 (0.003)&0.96 (0.005)&0.945 (0.010) \\
                      & 3 min        & \textbf{0.999} (0.001)&0.993 (0.001)&0.985 (0.002)&0.973 (0.003)&0.959 (0.005)\\ 
                      & 3 min        & \textbf{0.999} (0.001)&0.993 (0.001)&0.985 (0.002)&0.973 (0.003)&0.959 (0.005)\\ \bottomrule
\end{tabular}
\caption{Sensitivity analysis for the dataset sizes, with a large dataset $(T = 2000)$ for the first data generation mechanism ($K=3$). RBODLs for SGEM, SGVI and SGLD under running times of 1, 3 and 5 minutes, with $\kappa = 0.01$ being the reference subsampling ratio. Average RBODL across 50 datasets is shown, with standard deviations in the brackets.}
\label{tab:RBODL_slong}
\end{table*}

\begin{table*}[t]
\centering
\begin{tabular}{@{}ccccccc@{}}
\toprule
methods               & running time & 0.05         & 0.1          & 0.2          & 0.3          & 0.4          \\ \midrule
\multirow{3}{*}{SGEM} & 1 min        & \textbf{1.008} (0.008)&1.008 (0.005)&1.008 (0.005)&1.007 (0.005)&1.006 (0.005) \\
                      & 3 min        & 1.007 (0.007)& \textbf{1.008} (0.006)&1.008 (0.007)&1.007 (0.006)&1.007 (0.007) \\
                      & 5 min        & 1.008 (0.01)&1.008 (0.009)& \textbf{1.009} (0.009)&1.008 (0.009)&1.008 (0.009) \\ \cmidrule(l){3-7} 
\multirow{3}{*}{SGEM-c} & 1 min        & \textbf{1.003} (0.012)&1.003 (0.012)&1.002 (0.011)&1.001 (0.011)&1.001 (0.011) \\
                      & 3 min        & \textbf{1.003} (0.007)&1.003 (0.007)&1.004 (0.007)&1.003 (0.007)&1.003 (0.007) \\
                      & 5 min        & 1.002 (0.006)& \textbf{1.003} (0.006)&1.003 (0.005)&1.002 (0.006)&1.002 (0.006) \\ \cmidrule(l){3-7} 
\multirow{3}{*}{SGVI} & 1 min        & \textbf{1.032} (0.009)&1.032 (0.009)&1.03 (0.009)&1.029 (0.009)&1.027 (0.009) \\
                      & 3 min        & \textbf{1.032} (0.009)&1.032 (0.009)&1.032 (0.009)&1.031 (0.009)&1.03 (0.009)\\
                      & 5 min        & \textbf{1.033} (0.009)&1.033 (0.009)&1.033 (0.009)&1.032 (0.009)&1.032 (0.009) \\ \cmidrule(l){3-7} 
\multirow{3}{*}{SGVI-c}& 1 min        & \textbf{1.021} (0.007)&1.021 (0.007)&1.019 (0.007)&1.018 (0.007)&1.016 (0.007) \\
                      & 3 min        & \textbf{1.022} (0.007)&1.022 (0.007)&1.022 (0.007)&1.021 (0.007)&1.02 (0.007) \\
                      & 5 min        & \textbf{1.022} (0.007)&1.022 (0.007)&1.022 (0.007)&1.022 (0.007)&1.021 (0.007) \\ \cmidrule(l){3-7} 
\multirow{3}{*}{SGLD} & 1 min        & \textbf{1.017} (0.006)&1.015 (0.006)&1.008 (0.005)&1.004 (0.005)&0.997 (0.004) \\
                      & 3 min        & \textbf{1.019} (0.006)&1.018 (0.006)&1.014 (0.006)&1.009 (0.005)&1.005 (0.005) \\                      
                      & 5 min        & \textbf{1.020} (0.006)&1.020 (0.006)&1.017 (0.006)&1.011 (0.005)&1.007 (0.005) \\ \bottomrule
\end{tabular}
\caption{Sensitivity analysis for the dataset sizes, with a large dataset $(T = 500)$ for the first data generation mechanism ($K=3$). RBODLs for SGEM, SGVI and SGLD under running times of 1, 3 and 5 minutes, with $\kappa = 0.01$ being the reference subsampling ratio. Average RBODL across 50 datasets is shown, with standard deviations in the brackets.}
\label{tab:RBODL_sshort}
\end{table*}

\begin{table}[!ht]
\centering
\begin{tabular}{@{}ccccccc@{}}
\toprule
methods               & running time & 0.05         & 0.1          & 0.2          & 0.3          & 0.4          \\ \midrule
\multirow{3}{*}{SGEM} & 1 min        & \textbf{1.003} (0.007)&1.002 (0.005)&1.000 (0.005)&0.997 (0.005)&0.995 (0.005) \\
                      & 3 min        & \textbf{1.004} (0.006)&1.003 (0.005)&1.002 (0.006)&1.000 (0.006)&0.998 (0.006) \\
                      & 5 min        & 1.003 (0.008)& \textbf{1.004} (0.008)&1.003 (0.008)&1.001 (0.008)&1.000 (0.008) \\ \cmidrule(l){3-7} 
\multirow{3}{*}{SGEM-c} & 1 min     & \textbf{1.003} (0.005)&1.003 (0.005)&1.000 (0.004)&0.998 (0.004)&0.996 (0.005) \\
                      & 3 min        & 1.003 (0.009)& \textbf{1.004} (0.008)&1.003 (0.008)&1.001 (0.008)&1.001 (0.008) \\
                      & 5 min        & \textbf{1.004} (0.006)&1.003 (0.006)&1.003 (0.006)&1.002 (0.006)&1.001 (0.006) \\ \cmidrule(l){3-7} 
\multirow{3}{*}{SGVI} & 1 min        & \textbf{1.003} (0.005)&1.001 (0.005)&0.998 (0.005)&0.996 (0.005)&0.993 (0.005) \\
                      & 3 min        & 1.003 (0.006) & \textbf{1.004} (0.006)&1.002 (0.006)&1.000 (0.007)&0.998 (0.006) \\
                      & 5 min        & \textbf{1.004} (0.006)&1.004 (0.006)&1.003 (0.006)&1.001 (0.006)&1.000 (0.006) \\ \cmidrule(l){3-7} 
\multirow{3}{*}{SGVI-c}& 1 min        & \textbf{1.003} (0.003)&1.003 (0.002)&1.000 (0.002)&0.998 (0.002)&0.996 (0.002) \\
                      & 3 min        & \textbf{1.004} (0.006)&1.004 (0.005)&1.003 (0.005)&1.002 (0.005)&1.000 (0.005) \\
                      & 5 min        & \textbf{1.003} (0.012)&1.003 (0.012)&1.003 (0.012)&1.002 (0.012)&1.001 (0.012) \\ \cmidrule(l){3-7} 
\multirow{3}{*}{SGLD} & 1 min        & \textbf{1.004} (0.001)&1.001 (0.001)&0.996 (0.001)&0.993 (0.001)&0.99 (0.002) \\
                      & 3 min        & \textbf{1.003} (0.001)&1.003 (0.001)&1.000 (0.001)&0.997 (0.001)&0.995 (0.001) \\
                      & 5 min        & \textbf{1.003} (0.001)&1.003 (0.001)&1.001 (0.001)&0.999 (0.001)&0.997 (0.001)\\ \bottomrule
\end{tabular}
\caption{Sensitivity analysis for the stochastic search parameters for the first data generation mechanism ($K=3$), with $\tau_1 = 5, \tau_2 = 0.51$. RBODLs for SGEM, SGVI and SGLD under running times of 1, 3 and 5 minutes, with $\kappa = 0.01$ being the reference subsampling ratio. Average RBODL across 50 datasets is shown, with standard deviations in the brackets.\label{tab:RBODL_s2}}
\end{table}

\begin{table*}[!ht]
\centering
\begin{tabular}{@{}ccccccc@{}}
\toprule
methods               & running time & 0.05         & 0.1          & 0.2          & 0.3          & 0.4          \\ \midrule
\multirow{3}{*}{SGEM} & 1 min        & \textbf{1.003} (0.007)&1.002 (0.005)&1.000 (0.005)&0.997 (0.005)&0.995 (0.005) \\
                      & 3 min        & \textbf{1.004} (0.006)&1.003 (0.005)&1.002 (0.006)&1.000 (0.006)&0.998 (0.006) \\
                      & 5 min        & 1.003 (0.008)&\textbf{1.004} (0.008)&1.003 (0.008)&1.001 (0.008)&1.000 (0.008) \\ \cmidrule(l){3-7} 
\multirow{3}{*}{SGEM-c} & 1 min        & \textbf{1.003} (0.005)&1.003 (0.005)&1.000 (0.004)&0.998 (0.004)&0.996 (0.005) \\
                      & 3 min        & 1.003 (0.009)& \textbf{1.004} (0.008)&1.003 (0.008)&1.001 (0.008)&1.001 (0.008) \\
                      & 5 min        & \textbf{1.004} (0.006)&1.003 (0.006)&1.003 (0.006)&1.002 (0.006)&1.001 (0.006) \\ \cmidrule(l){3-7} 
\multirow{3}{*}{SGVI} & 1 min        & \textbf{1.003} (0.005)&1.001 (0.005)&0.998 (0.005)&0.996 (0.005)&0.993 (0.005) \\
                      & 3 min        & 1.003 (0.006)&\textbf{1.004} (0.006)&1.002 (0.006)&1.000 (0.007)&0.998 (0.006) \\
                      & 5 min        & 1.004 (0.006)& \textbf{1.004} (0.006)&1.003 (0.006)&1.001 (0.006)&1.000 (0.006) \\ \cmidrule(l){3-7} 
\multirow{3}{*}{SGVI-c}& 1 min        & \textbf{1.003} (0.003)&1.003 (0.002)&1.000 (0.002)&0.998 (0.002)&0.996 (0.002) \\
                      & 3 min        & \textbf{1.004} (0.006)&1.004 (0.005)&1.003 (0.005)&1.002 (0.005)&1.000 (0.005) \\
                      & 5 min        & \textbf{1.003} (0.012)&1.003 (0.012)&1.003 (0.012)&1.002 (0.012)&1.001 (0.012) \\ \cmidrule(l){3-7} 
\multirow{3}{*}{SGLD} & 1 min        & \textbf{1.004} (0.001)&1.001 (0.001)&0.996 (0.001)&0.993 (0.001)&0.990 (0.002) \\
                      & 3 min        & \textbf{1.003} (0.001)&1.003 (0.001)&1.000 (0.001)&0.997 (0.001)&0.995 (0.001) \\
                      & 5 min        & \textbf{1.003} (0.001)&1.003 (0.001)&1.001 (0.001)&0.999 (0.001)&0.997 (0.001)\\ \bottomrule
\end{tabular}
\caption{Sensitivity analysis for the stochastic search parameters for the first data generation mechanism ($K=3$), with $\tau_1 = 1, \tau_2 = 1$. RBODLs for SGEM, SGVI and SGLD under running times of 1, 3 and 5 minutes, with $\kappa = 0.01$ being the reference subsampling ratio. Average RBODL across 50 datasets is shown, with standard deviations in the brackets.\label{tab:RBODL_s3}}
\end{table*}

\begin{table*}[!ht]
\centering
\begin{tabular}{@{}ccccccc@{}}
\toprule
methods               & running time & 0.05         & 0.1          & 0.2          & 0.3          & 0.4          \\ \midrule
\multirow{3}{*}{SGEM} & 1 min        & \textbf{1.000} (0.006)&0.999 (0.006)&0.997 (0.007)&0.997 (0.006)&0.995 (0.006) \\
                      & 3 min        & 0.999 (0.009)&\textbf{1.000} (0.009)&0.997 (0.009)&0.997 (0.009)&0.997 (0.009) \\
                      & 5 min        & \textbf{1.000} (0.007)&0.998 (0.006)&0.998 (0.007)&0.997 (0.007)&0.996 (0.007) \\ \cmidrule(l){3-7} 
\multirow{3}{*}{SGEM-c} & 1 min     & \textbf{1.000} (0.009)&1.000 (0.010)&1.000 (0.01)&0.999 (0.011)&0.999 (0.011) \\
                      & 3 min        & \textbf{1.000} (0.005)&1.000 (0.004)&0.999 (0.005)&0.999 (0.005)&0.998 (0.006) \\
                      & 5 min        & \textbf{1.001} (0.009)&1.000 (0.009)&1.000 (0.009)&0.999 (0.009)&0.998 (0.009) \\ \cmidrule(l){3-7} 
\multirow{3}{*}{SGVI} & 1 min        & \textbf{0.998} (0.002)&0.995 (0.003)&0.991 (0.004)&0.989 (0.005)&0.987 (0.006) \\
                      & 3 min        & \textbf{0.998} (0.002)&0.996 (0.003)&0.993 (0.004)&0.992 (0.004)&0.990 (0.004) \\
                      & 5 min        & \textbf{0.999} (0.002)&0.997 (0.003)&0.995 (0.003)&0.993 (0.004)&0.991 (0.004) \\ \cmidrule(l){3-7} 
\multirow{3}{*}{SGVI-c}& 1 min      & \textbf{0.996} (0.002)&0.993 (0.003)&0.990 (0.004)&0.988 (0.005)&0.985 (0.006) \\
                      & 3 min        & \textbf{0.997} (0.002)&0.994 (0.003)&0.992 (0.004)&0.990 (0.004)&0.989 (0.005) \\
                      & 5 min        & \textbf{0.997} (0.002)&0.995 (0.003)&0.992 (0.003)&0.991 (0.004)&0.990 (0.004) \\ \cmidrule(l){3-7} 
\multirow{3}{*}{SGLD} & 1 min        & \textbf{1.001} (0.001)&1.000 (0.004)&0.997 (0.01)&0.989 (0.017)&0.972 (0.028) \\
                      & 3 min        & \textbf{1.001} (0.001)&1.000 (0.003)&0.998 (0.008)&0.995 (0.013)&0.985 (0.02)\\
                      & 5 min        & \textbf{1.001} (0.001)&1.000 (0.003)&0.998 (0.006)&0.995 (0.011)&0.990 (0.017) \\ \bottomrule
\end{tabular}
\caption{Sensitivity analysis for the stochastic search parameters for the first data generation mechanism ($K=3$), with $\tau_1 = 1, \tau_2 = 5$. RBODLs for SGEM, SGVI and SGLD under running times of 1, 3 and 5 minutes, with $\kappa = 0.01$ being the reference subsampling ratio. Average RBODL across 50 datasets is shown, with standard deviations in the brackets.\label{tab:RBODL_s4}}
\end{table*}

\begin{table}[!ht]
\centering
\begin{tabular}{@{}ccccccccccc@{}}
\toprule
 &
  \multicolumn{5}{c}{($\tau_1 = 1, \tau_2 = 0.51$)} &
  \multicolumn{5}{c}{($\tau_1 = 5, \tau_2 = 0.51$)} \\ \midrule
methods &
  \makecell{RMISE \\($\boldsymbol{\alpha}, \boldsymbol{\beta}$)} &
  MAE ($\boldsymbol{\mu}$) &
  IS &
  ACR &
  AIW &
  \makecell{RMISE \\($\boldsymbol{\alpha}, \boldsymbol{\beta}$)} &
  MAE ($\boldsymbol{\mu}$) &
  IS &
  ACR &
  AIW \\
SGLD &
  \begin{tabular}[c]{@{}c@{}}0.052\\ (0.019)\end{tabular} &
  \begin{tabular}[c]{@{}c@{}}0.109\\ (0.283)\end{tabular} &
  \begin{tabular}[c]{@{}c@{}}3.898\\ (4.739)\end{tabular} &
  \begin{tabular}[c]{@{}c@{}}0.667\\ (0.152)\end{tabular} &
  \begin{tabular}[c]{@{}c@{}}0.844\\ (0.172)\end{tabular} &
  \begin{tabular}[c]{@{}c@{}}0.053\\ (0.014)\end{tabular} &
  \begin{tabular}[c]{@{}c@{}}0.108\\ (0.046)\end{tabular} &
  \begin{tabular}[c]{@{}c@{}}3.912\\ (3.118)\end{tabular} &
  \begin{tabular}[c]{@{}c@{}}0.667\\ (0.129)\end{tabular} &
  \begin{tabular}[c]{@{}c@{}}0.907\\ (0.193)\end{tabular} \\
SGVI &
  \begin{tabular}[c]{@{}c@{}}0.046\\ (0.008)\end{tabular} &
  \begin{tabular}[c]{@{}c@{}}0.103\\ (0.048)\end{tabular} &
  \begin{tabular}[c]{@{}c@{}}6.163\\ (2.117)\end{tabular} &
  \begin{tabular}[c]{@{}c@{}}0.333\\ (0.131)\end{tabular} &
  \begin{tabular}[c]{@{}c@{}}0.222\\ (0.012)\end{tabular} &
  \begin{tabular}[c]{@{}c@{}}0.046\\ (0.008)\end{tabular} &
  \begin{tabular}[c]{@{}c@{}}0.105\\ (0.048)\end{tabular} &
  \begin{tabular}[c]{@{}c@{}}6.359\\ (2.104)\end{tabular} &
  \begin{tabular}[c]{@{}c@{}}0.333\\ (0.138)\end{tabular} &
  \begin{tabular}[c]{@{}c@{}}0.222\\ (0.012)\end{tabular} \\
SGVI-c &
  \begin{tabular}[c]{@{}c@{}}0.040\\ (0.007)\end{tabular} &
  \begin{tabular}[c]{@{}c@{}}0.093\\ (0.044)\end{tabular} &
  \begin{tabular}[c]{@{}c@{}}4.905\\ (1.698)\end{tabular} &
  \begin{tabular}[c]{@{}c@{}}0.429\\ (0.139)\end{tabular} &
  \begin{tabular}[c]{@{}c@{}}0.213\\ (0.012)\end{tabular} &
  \begin{tabular}[c]{@{}c@{}}0.040\\ (0.007)\end{tabular} &
  \begin{tabular}[c]{@{}c@{}}0.095\\ (0.044)\end{tabular} &
  \begin{tabular}[c]{@{}c@{}}4.797\\ (1.695)\end{tabular} &
  \begin{tabular}[c]{@{}c@{}}0.429\\ (0.139)\end{tabular} &
  \begin{tabular}[c]{@{}c@{}}0.216\\ (0.012)\end{tabular} \\
SGEM &
  \begin{tabular}[c]{@{}c@{}}0.049\\ (0.008)\end{tabular} &
  \begin{tabular}[c]{@{}c@{}}0.044\\ (0.042)\end{tabular} &
  - &
  - &
  - &
  \begin{tabular}[c]{@{}c@{}}0.050\\ (0.008)\end{tabular} &
  \begin{tabular}[c]{@{}c@{}}0.046\\ (0.040)\end{tabular} &
  - &
  - &
  - \\
SGEM-c &
  \begin{tabular}[c]{@{}c@{}}0.046\\ (0.008)\end{tabular} &
  \begin{tabular}[c]{@{}c@{}}0.038\\ (0.037)\end{tabular} &
  - &
  - &
  - &
  \begin{tabular}[c]{@{}c@{}}0.049\\ (0.008)\end{tabular} &
  \begin{tabular}[c]{@{}c@{}}0.044\\ (0.040)\end{tabular} &
  - &
  - &
  - \\ \bottomrule
\end{tabular}\captionof{table}{Estimation metrics across all seven methods under different sets of stochastic search parameters for the first data generation mechanism ($K=3$). The values in the grid cells are the average across 50 datasets, with the standard deviation in the brackets.
\label{tab:est_compare_stoc1}}
\end{table}

\begin{table}[t]
\centering
\begin{tabular}{@{}ccccccccccc@{}}
\toprule
 &
  \multicolumn{5}{c}{($\tau_1 = 1, \tau_2 = 1$)} &
  \multicolumn{5}{c}{($\tau_1 = 5, \tau_2 = 1$)} \\ \midrule
methods &
  \makecell{RMISE \\($\boldsymbol{\alpha}, \boldsymbol{\beta}$)} &
  RMSE($\boldsymbol{\mu}$) &
  IS &
  ACR &
  AIW &
  \makecell{RMISE \\($\boldsymbol{\alpha}, \boldsymbol{\beta}$)} &
  RMSE($\boldsymbol{\mu}$) &
  IS &
  ACR &
  AIW \\
SGLD &
  \begin{tabular}[c]{@{}c@{}}0.137\\ (0.033)\end{tabular} &
  \begin{tabular}[c]{@{}c@{}}0.264\\ (0.528)\end{tabular} &
  \begin{tabular}[c]{@{}c@{}}33.020\\ (70.245)\end{tabular} &
  \begin{tabular}[c]{@{}c@{}}0.095\\ (0.055)\end{tabular} &
  \begin{tabular}[c]{@{}c@{}}0.231\\ (0.218)\end{tabular} &
  \begin{tabular}[c]{@{}c@{}}0.150\\ (0.033)\end{tabular} &
  \begin{tabular}[c]{@{}c@{}}0.332\\ (0.121)\end{tabular} &
  \begin{tabular}[c]{@{}c@{}}37.133\\ (66.530)\end{tabular} &
  \begin{tabular}[c]{@{}c@{}}0.048\\ (0.056)\end{tabular} &
  \begin{tabular}[c]{@{}c@{}}0.250\\ (0.230)\end{tabular} \\
SGVI &
  \begin{tabular}[c]{@{}c@{}}0.158\\ (0.024)\end{tabular} &
  \begin{tabular}[c]{@{}c@{}}0.182\\ (0.061)\end{tabular} &
  \begin{tabular}[c]{@{}c@{}}41.842\\ (12.961)\end{tabular} &
  \begin{tabular}[c]{@{}c@{}}0.095\\ (0.046)\end{tabular} &
  \begin{tabular}[c]{@{}c@{}}0.291\\ (0.087)\end{tabular} &
  \begin{tabular}[c]{@{}c@{}}0.158\\ (0.025)\end{tabular} &
  \begin{tabular}[c]{@{}c@{}}0.182\\ (0.061)\end{tabular} &
  \begin{tabular}[c]{@{}c@{}}41.879\\ (13.020)\end{tabular} &
  \begin{tabular}[c]{@{}c@{}}0.095\\ (0.046)\end{tabular} &
  \begin{tabular}[c]{@{}c@{}}0.291\\ (0.088)\end{tabular} \\
SGVI-c &
  \begin{tabular}[c]{@{}c@{}}0.152\\ (0.026)\end{tabular} &
  \begin{tabular}[c]{@{}c@{}}0.176\\ (0.061)\end{tabular} &
  \begin{tabular}[c]{@{}c@{}}36.011\\ (11.889)\end{tabular} &
  \begin{tabular}[c]{@{}c@{}}0.095\\ (0.055)\end{tabular} &
  \begin{tabular}[c]{@{}c@{}}0.259\\ (0.070)\end{tabular} &
  \begin{tabular}[c]{@{}c@{}}0.152\\ (0.026)\end{tabular} &
  \begin{tabular}[c]{@{}c@{}}0.176\\ (0.061)\end{tabular} &
  \begin{tabular}[c]{@{}c@{}}36.011\\ (11.889)\end{tabular} &
  \begin{tabular}[c]{@{}c@{}}0.095\\ (0.055)\end{tabular} &
  \begin{tabular}[c]{@{}c@{}}0.259\\ (0.070)\end{tabular} \\
SGEM &
  \begin{tabular}[c]{@{}c@{}}0.087\\ (0.061)\end{tabular} &
  \begin{tabular}[c]{@{}c@{}}0.136\\ (0.062)\end{tabular} &
  - &
  - &
  - &
  \begin{tabular}[c]{@{}c@{}}0.113\\ (0.061)\end{tabular} &
  \begin{tabular}[c]{@{}c@{}}0.140\\ (0.150)\end{tabular} &
  - &
  - &
  - \\
SGEM-c &
  \begin{tabular}[c]{@{}c@{}}0.113\\ (0.061)\end{tabular} &
  \begin{tabular}[c]{@{}c@{}}0.147\\ (0.054)\end{tabular} &
  - &
  - &
  - &
  \begin{tabular}[c]{@{}c@{}}0.151\\ (0.043)\end{tabular} &
  \begin{tabular}[c]{@{}c@{}}0.114\\ (0.062)\end{tabular} &
  - &
  - &
  - \\ \bottomrule
\end{tabular}
\caption{Estimation metrics across all seven methods under different sets of stochastic search parameters for the first data generation mechanism ($K=3$). The values in the grid cells are the average across 50 datasets, with the standard deviation in the brackets.\label{tab:est_compare2}}
\end{table}

\begin{table}[ht]
\centering
\begin{tabular}{@{}lllllll@{}}
\toprule
                        &         & RMISE ($\boldsymbol{\alpha}, \boldsymbol{\beta}$) & MAE ($\boldsymbol{\mu}$) & IS            & ACR           & AIW           \\ \midrule
\multirow{5}{*}{SGVI-c} & $r=0.5$ & 0.050 (0.008)                                      & 0.097 (0.048)            & 6.030 (1.420) & 0.333 (0.078) & 0.220 (0.009) \\
                        & $r=1$   & 0.040 (0.007) & 0.093 (0.044) & 4.905 (1.698)  & 0.429 (0.139) & 0.213 (0.012) \\
                        & $r=2$   & 0.039 (0.007) & 0.064 (0.038) & 2.609 (1.217)  & 0.476 (0.097) & 0.200 (0.011) \\
                        & $r=3$   & 0.052 (0.007) & 0.077 (0.037) & 5.674 (0.770)  & 0.357 (0.108) & 0.180 (0.012) \\
                        & $r=4$   & 0.080 (0.006) & 0.100 (0.045) & 13.374 (1.996) & 0.238 (0.123) & 0.161 (0.012) \\
\multirow{5}{*}{SGEM-c} & $r=0.5$ & 0.050 (0.008) & 0.038 (0.040) & -              & -             & -             \\
                        & $r=1$   & 0.046 (0.008) & 0.038 (0.037) & -              & -             & -             \\
                        & $r=2$   & 0.039 (0.007) & 0.024 (0.029) & -              & -             & -             \\
                        & $r=3$   & 0.048 (0.007) & 0.019 (0.025) & -              & -             & -             \\
                        & $r=4$   & 0.073 (0.007) & 0.028 (0.034) & -              & -             & -             \\ \bottomrule 
\end{tabular}
\caption{Estimation metrics for SGVI-c and SGEM-c under different values of $r$ for the first data generation mechanism ($K=3$). The values in the grid cells are the average across 50 datasets, with the standard deviation in the brackets.\label{tab:sens_thres}}
\end{table}

\begin{table}[ht]
\centering
\begin{tabular}{rrrrrrrr}
  \hline
 & MCMC-c & MCMC & SGVI-c & SGVI & SGEM-c & SEM & SGLD \\ 
  \hline
MCMC-c & 0.000 & 0.018 & 0.302 & 0.272 & 0.263 & 0.252 & 0.588 \\ 
  MCMC & 0.018 & 0.000 & 0.294 & 0.246 & 0.237 & 0.228 & 0.565 \\ 
  SGVI-c & 0.302 & 0.294 & 0.000 & 0.294 & 0.280 & 0.272 & 0.485 \\ 
  SGVI & 0.272 & 0.246 & 0.294 & 0.000 & 0.017 & 0.013 & 0.565 \\ 
  SGEM-c & 0.263 & 0.237 & 0.280 & 0.017 & 0.000 & 0.003 & 0.582 \\ 
  SEM & 0.252 & 0.228 & 0.272 & 0.013 & 0.003 & 0.000 & 0.573 \\ 
  SGLD & 0.588 & 0.565 & 0.485 & 0.565 & 0.582 & 0.573 & 0.000 \\ 
   \hline
\end{tabular}
\caption{Mean square distance between the point estimates for $\boldsymbol{\alpha}$ among the seven methods, computed after log transformation.}
\label{tab:app_dist_alpha}
\begin{tabular}{rrrrrrrr}
  \hline
 & MCMC-c & MCMC & SGVI-c & SGVI & SGEM-c & SEM & SGLD \\ 
  \hline
MCMC-c & 0.000 & 0.486 & 2.385 & 2.568 & 2.540 & 2.596 & 3.541 \\ 
  MCMC & 0.486 & 0.000 & 2.596 & 2.531 & 2.577 & 2.577 & 3.745 \\ 
  SGVI-c & 2.385 & 2.596 & 0.000 & 2.373 & 2.264 & 2.375 & 4.290 \\ 
  SGVI & 2.568 & 2.531 & 2.373 & 0.000 & 0.078 & 0.053 & 3.435 \\ 
  SGEM-c & 2.540 & 2.577 & 2.264 & 0.078 & 0.000 & 0.021 & 3.267 \\ 
  SEM & 2.596 & 2.577 & 2.375 & 0.053 & 0.021 & 0.000 & 3.462 \\ 
  SGLD & 3.541 & 3.745 & 4.290 & 3.435 & 3.267 & 3.462 & 0.000 \\ 
   \hline
\end{tabular}
\caption{Mean square distance between the point estimates for $\boldsymbol{\beta}$ among the seven methods, computed after log transformation.}
\label{tab:app_dist_beta}
\begin{tabular}{rrrrrrrr}
  \hline
 & MCMC-c & MCMC & SGVI-c & SGVI & SGEM-c & SEM & SGLD \\ 
  \hline
MCMC-c & 0.000 & 0.007 & 0.090 & 0.097 & 0.142 & 0.129 & 0.219 \\ 
  MCMC & 0.007 & 0.000 & 0.114 & 0.115 & 0.175 & 0.161 & 0.212 \\ 
  SGVI-c & 0.090 & 0.114 & 0.000 & 0.061 & 0.083 & 0.072 & 0.270 \\ 
  SGVI & 0.097 & 0.115 & 0.061 & 0.000 & 0.045 & 0.034 & 0.324 \\ 
  SGEM-c & 0.142 & 0.175 & 0.083 & 0.045 & 0.000 & 0.001 & 0.466 \\ 
  SEM & 0.129 & 0.161 & 0.072 & 0.034 & 0.001 & 0.000 & 0.435 \\ 
  SGLD & 0.219 & 0.212 & 0.270 & 0.324 & 0.466 & 0.435 & 0.000 \\ 
   \hline
\end{tabular}
\caption{Mean square distance between the point estimates for $\boldsymbol{\mu}$ among the seven methods, computed after log transformation.}
\label{tab:app_dist_mu}
\end{table}

\begin{figure*}[t]
	\centering
	\includegraphics[width = \linewidth]{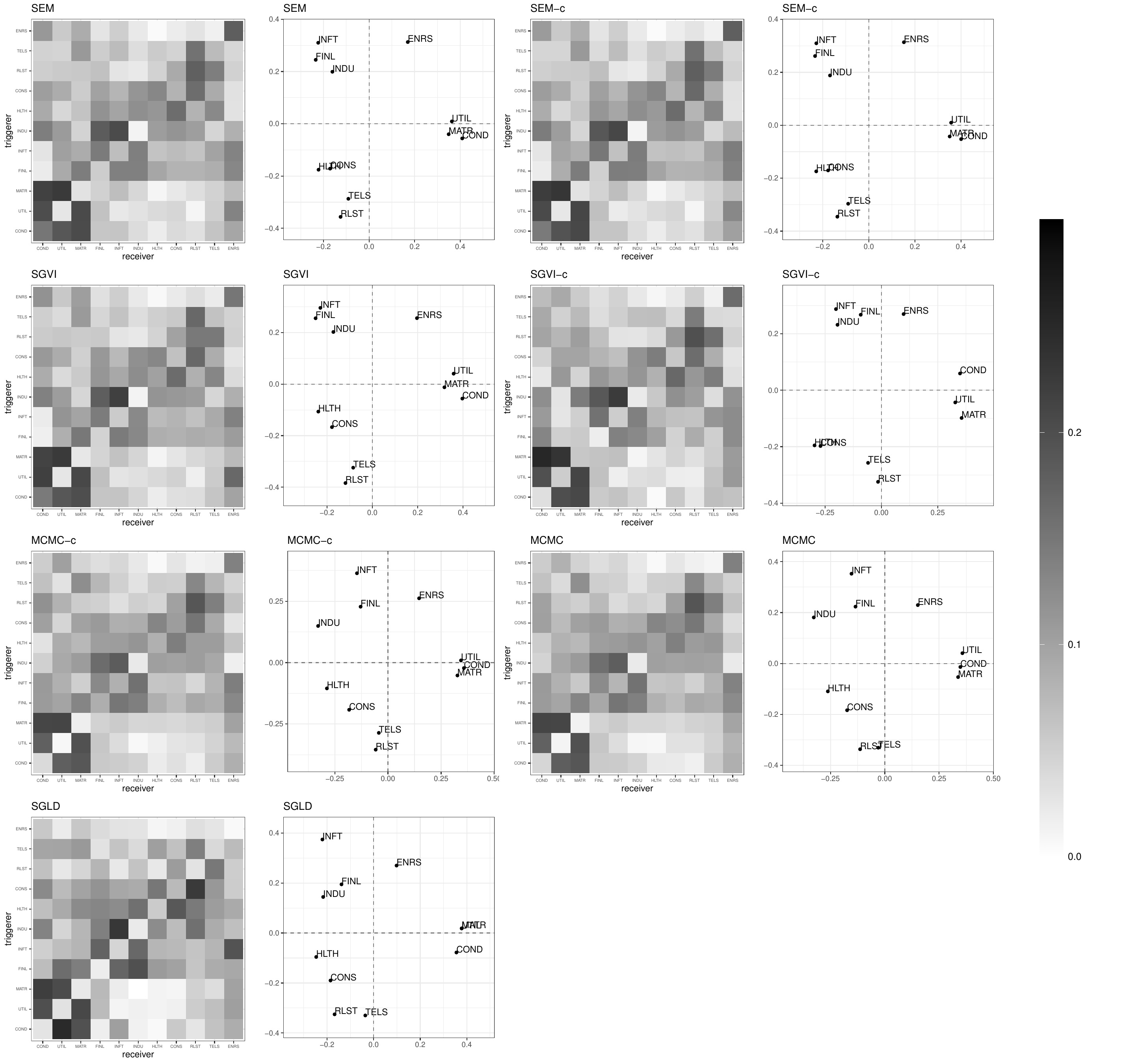}
	\caption{Left panel: heatmaps for point estimates of $\boldsymbol{\alpha}$ parameters for the corresponding 11 sectors, for all seven algorithms. Right panel: first two principal coordinates (after applying the Procrustes analysis algorithm) for the 11 sectors based on the distance measure matrix for $\boldsymbol{\alpha}$ estimates.\label{fig:mat_alpha}}
\end{figure*}

\begin{figure*}[t]
	\centering
	\includegraphics[width = 0.65\linewidth]{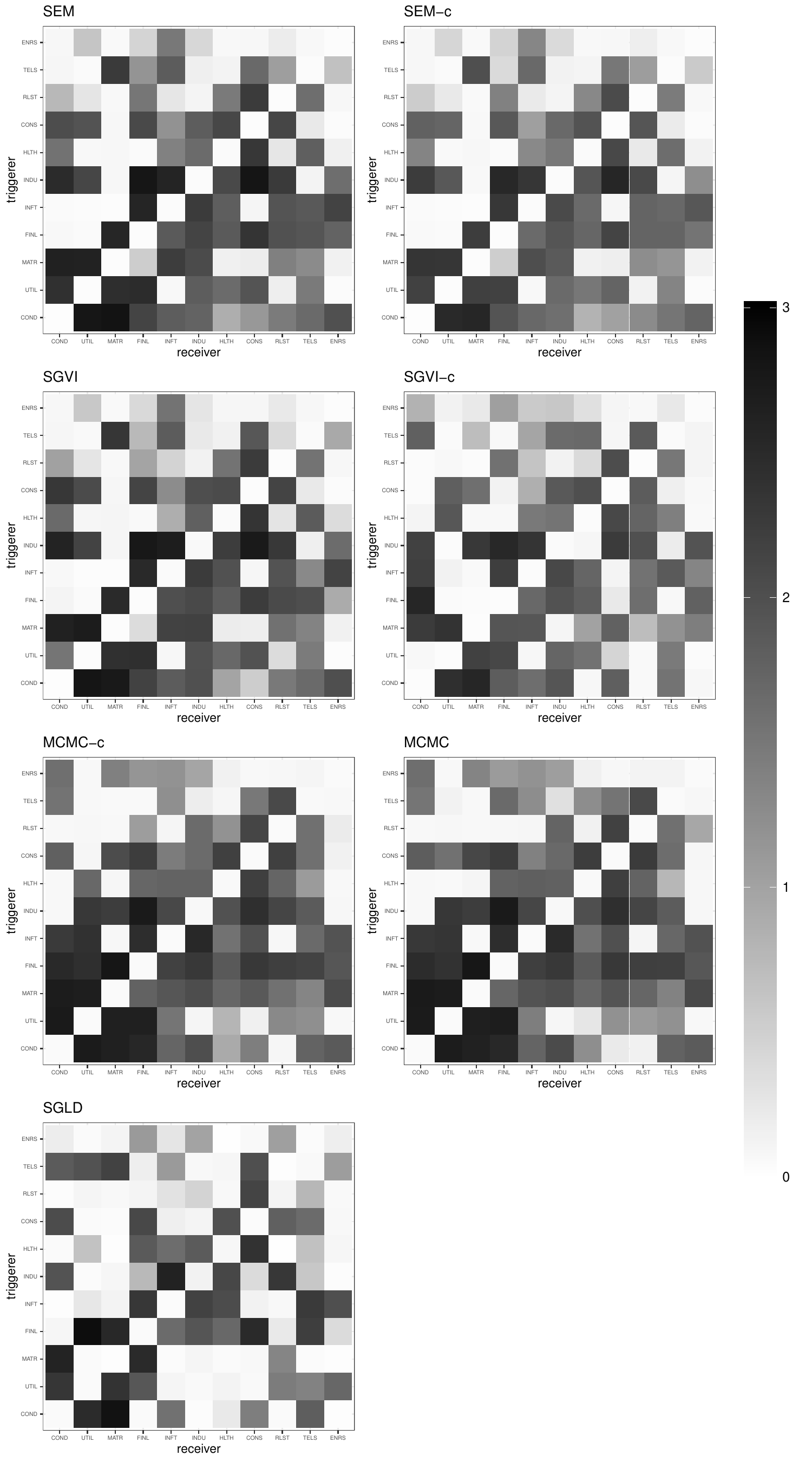}
	\caption{Heatmaps for point estimates of $\boldsymbol{\beta}$ parameters for the corresponding 11 sectors, for all seven algorithms. \label{fig:mat_beta}}
\end{figure*}

\begin{figure*}[t]
	\centering
	\includegraphics[width = \linewidth]{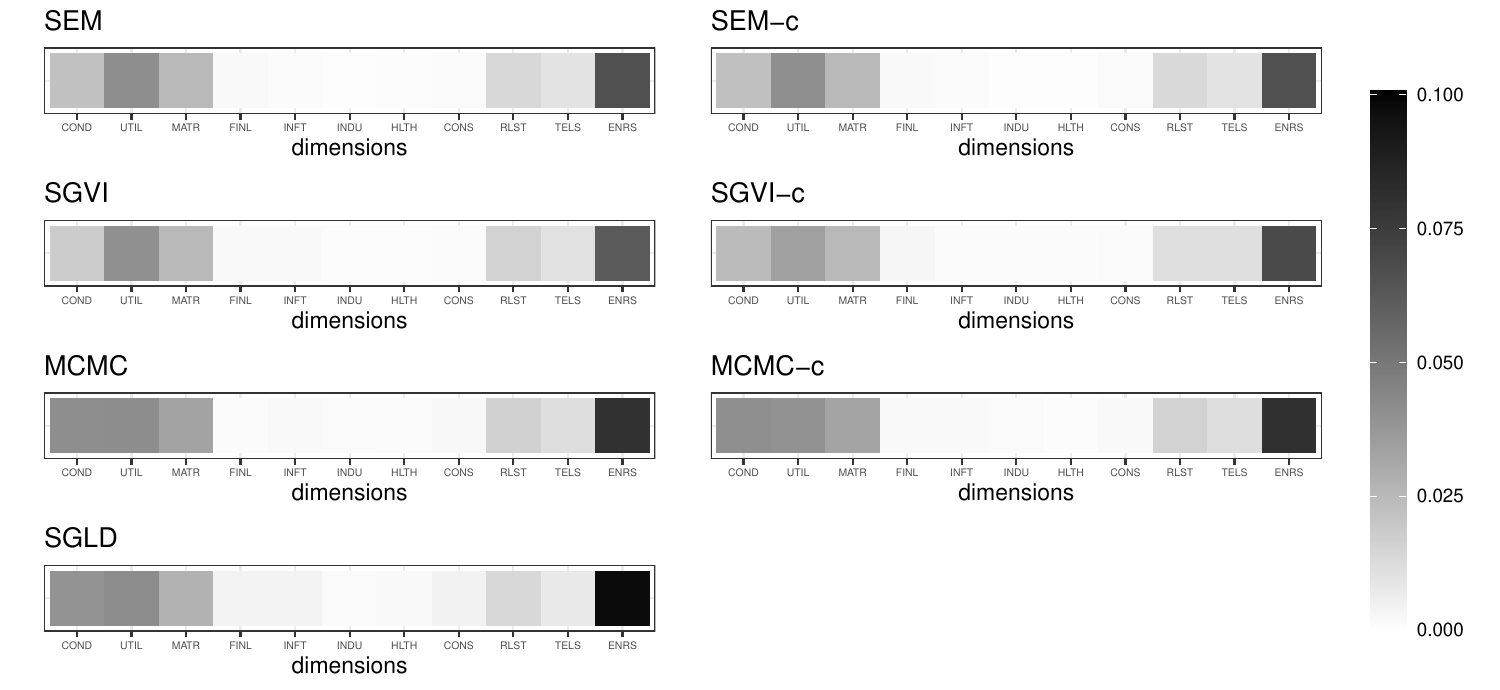}
	\caption{Heatmaps for point estimates of $\boldsymbol{\mu}$ parameters for the corresponding 11 sectors, for all seven algorithms.\label{fig:mat_mu}}
\end{figure*}

\begin{figure*}[t]
	\centering
	\includegraphics[width = \linewidth]{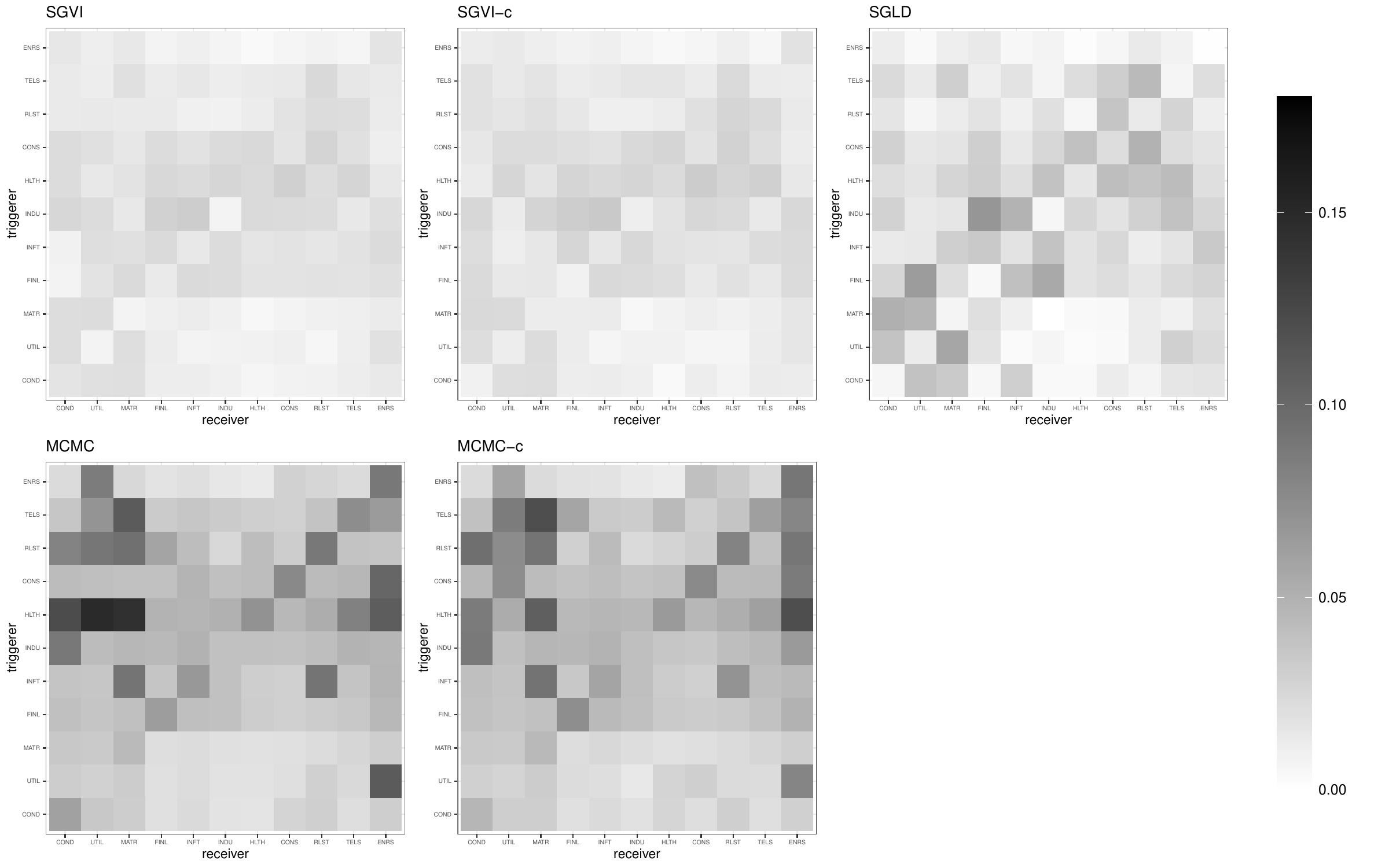}
	\caption{Heatmaps for 95\% credible interval lengths estimates of $\boldsymbol{\alpha}$ parameters for the corresponding 11 sectors, for all five algorithms that yield uncertainty estimates.
 \label{fig:app_unc_alpha}}
\end{figure*}

\begin{figure*}[t]
	\centering
	\includegraphics[width = \linewidth]{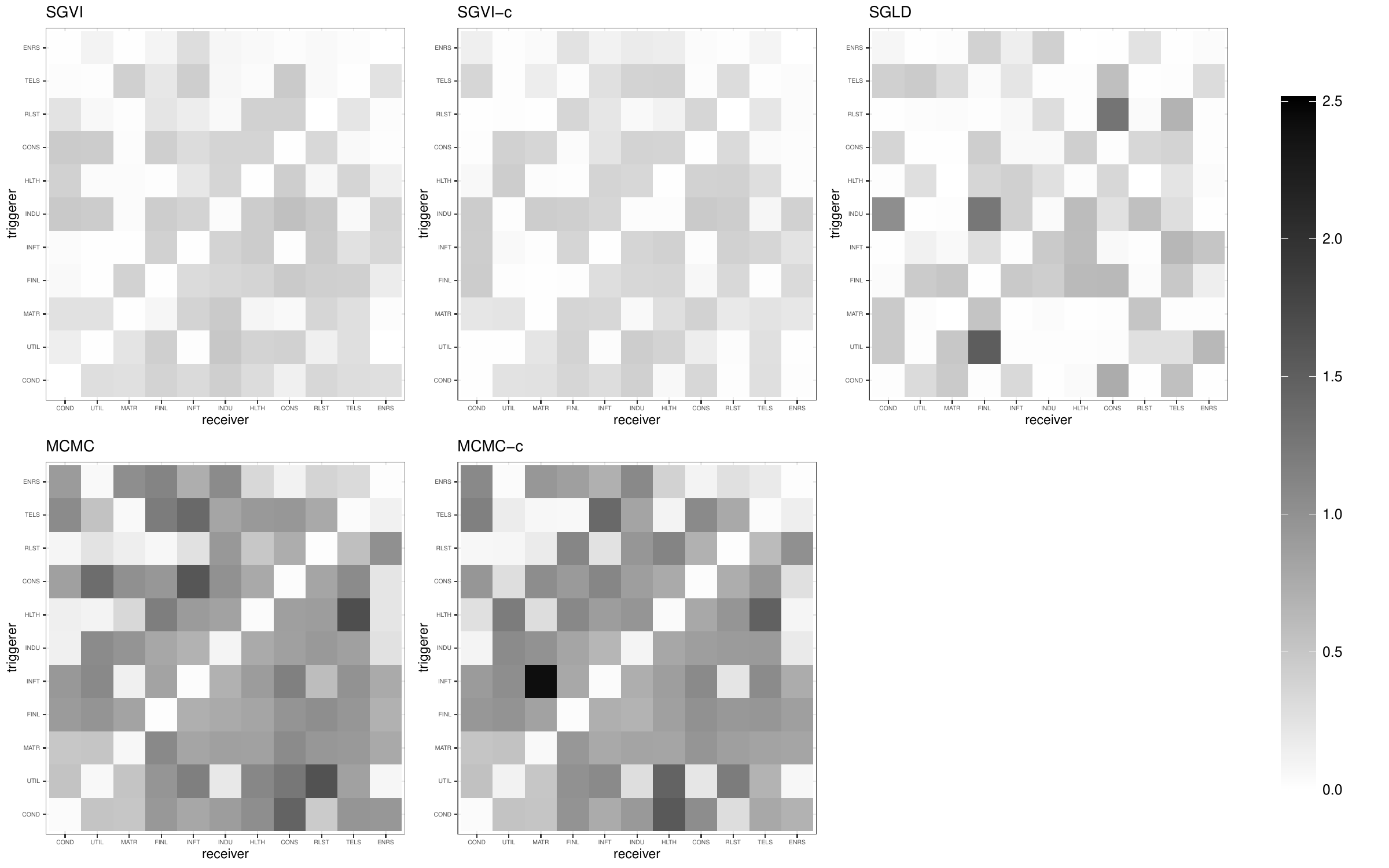}
	\caption{Heatmaps for 95\% credible interval lengths estimates of $\boldsymbol{\beta}$ parameters for the corresponding 11 sectors, for all five algorithms that yield uncertainty estimates.\label{fig:app_unc_beta}}
\end{figure*}

\begin{figure*}[t]
	\centering
	\includegraphics[width = \linewidth]{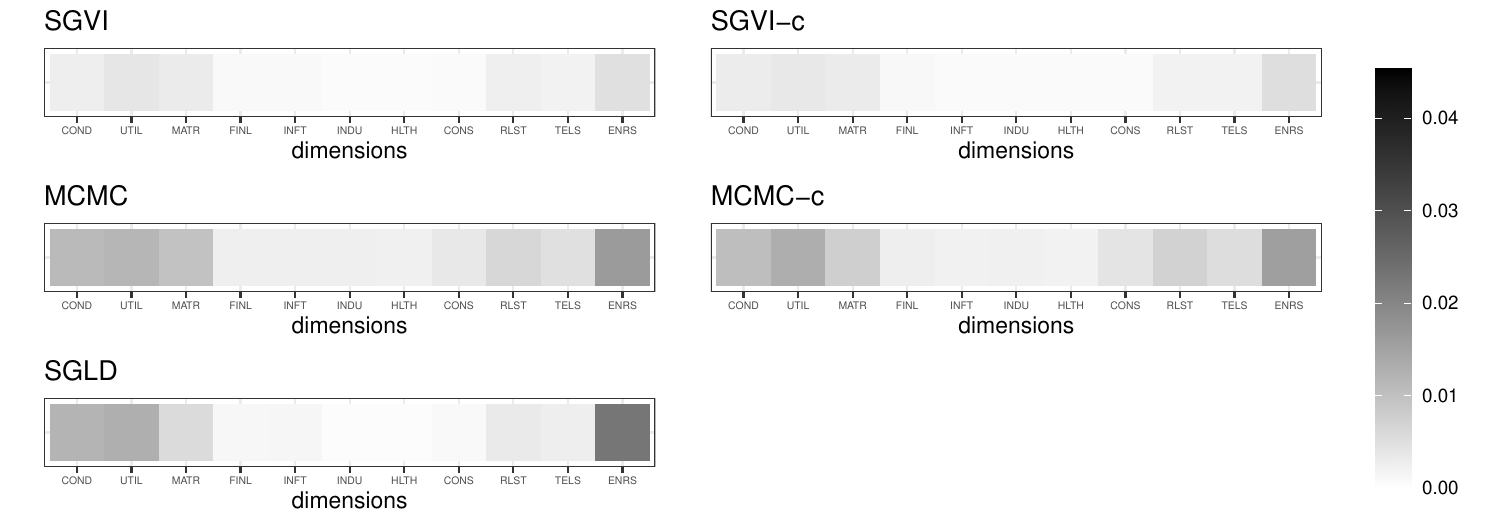}
	\caption{Heatmaps for 95\% credible interval lengths estimates of $\boldsymbol{\mu}$ parameters for the corresponding 11 sectors, for all five algorithms that yield uncertainty estimates.\label{fig:app_unc_mu}}
\end{figure*}


\end{document}